\newtheorem{remark}{Remark}
\newtheorem{proposition}{Proposition}
\numberwithin{condition}{section}
\numberwithin{assumption}{section}
\numberwithin{remark}{section}
\numberwithin{equation}{section}
\numberwithin{lemma}{section}
\numberwithin{definition}{section}
\numberwithin{theorem}{section}
\numberwithin{proposition}{section}
\numberwithin{table}{section}
\numberwithin{figure}{section}
\numberwithin{theorem}{section}
\numberwithin{corollary}{section}
\numberwithin{property}{section}
\numberwithin{algorithm}{section}
\newcommand{\EQ}{\begin{equation}}
\newcommand{\EN}{\end{equation}}
\newcommand{\EQS}{\begin{equation*}}
\newcommand{\ENS}{\end{equation*}}
\newcommand{\ds}{\displaystyle}
\newcommand*\xbar[1]{%
  \hbox{%
    \vbox{%
      \hrule height 0.5pt
      \kern0.5ex%
      \hbox{%
        \kern-0.1em%
        \ensuremath{#1}%
        \kern-0.1em%
      }%
    }%
  }%
}
\def\n1{n}
\def\argmax{\mathop{\rm arg\,max}}
\newsavebox{\savepar}
\numberwithin{equation}{section}
\numberwithin{table}{section}
\numberwithin{figure}{section}
\def\argmax{\mathop{\rm arg\,max}}
\begin{document}
\title{A Stochastic Control Approach to Defined Contribution Plan Decumulation:\\
         {\em ``The Nastiest, Hardest Problem in Finance''} \\
}

\author{Peter A. Forsyth\thanks{David R. Cheriton School of Computer Science,
        University of Waterloo, Waterloo ON, Canada N2L 3G1,
        \texttt{paforsyt@uwaterloo.ca}, +1 519 888 4567 ext.\ 34415.}
  }

\maketitle

%\linenumbers %Turns on line numbering.

\begin{abstract}
We pose the decumulation strategy for a Defined Contribution (DC) pension plan as a problem
in optimal stochastic control.  The controls are the withdrawal amounts and the asset allocation
strategy. We impose maximum and minimum constraints on the withdrawal amounts, and impose no-shorting
no-leverage constraints on the asset allocation strategy.  Our objective function measures reward
as the expected total withdrawals over the decumulation horizon, and risk is measured by 
Expected Shortfall (ES) at the end of the decumulation period.   
We solve the stochastic control problem numerically, based on a parametric
model of market stochastic processes.  We find that, compared to a fixed constant withdrawal strategy,
with minimum withdrawal set to the constant withdrawal amount,
the optimal strategy has a significantly
higher expected average withdrawal, at the cost of a very small increase in ES risk.
Tests on bootstrapped resampled historical market data indicate that this strategy is robust
to parametric model misspecification.

\vspace{5pt}
\noindent
\textbf{Keywords:} optimal control, DC plan decumulation, variable withdrawal, Expected Shortfall,  asset allocation,
resampled backtests

\noindent
\textbf{JEL codes:} G11, G22\\
\noindent
\textbf{AMS codes:} 91G, 65N06, 65N12, 35Q93
\end{abstract}

\section{Introduction}
The traditional Defined Benefit (DB) pension plan is in the process of disappearing
for new entrants into the labour market.\footnote{See, 
for example, {\em``The extinction of defined-benefit plans is almost upon us,''}
Globe and Mail, October 4, 2018.  \url{https://www.theglobeandmail.com/investing/personal-finance/retirement/article-the-extinction-of-defined-benefit-pension-plans-is-almost-upon-us/}}
In some countries, notably Australia, DB plans have been replaced by Defined Contribution (DC)  plans
almost exclusively. \footnote{In Australia, DC plans have 86\% of pension assets, compared with
14\% in DB assets.\citep{Towers_2020}}

Assuming the DC plan holder has accumulated a reasonable amount in her DC plan account, the retiree
is faced with an enormous challenge.  The retiree has to devise an investment policy
and a withdrawal strategy during the decumulation phase.
Nobel laureate William Sharpe has referred
to DC plan decumulation 
as {\em``the nastiest, hardest problem in finance''}
\citep{Sharpe2017}.

Although it is often suggested that DC plan holders should purchase annuities upon
retirement, this is rarely done \citep{Peijnenburg2016}.  In fact, \citet{MacDonald2013} argue that
in many instances, this entirely rational.  Reasons for the lack of interest in
annuities include  meager returns of annuities
in the current low interest rate environment, poor annuity pricing, the lack of true inflation protection,
and no access to capital in the event of emergencies.

For an extensive review of strategies for decumulation, we refer the reader to 
\citet{bernhardt-donnelly:2018} and \citet{MacDonald2013}.   A non-exhaustive list of the approaches
discussed by these authors include
use of traditional utility functions, practitioner rules
of thumb, target approaches, minimizing probability of ruin and modern tontines.
Previous decumulation strategies are  also summarized in \citet{forsyth_2020_a}.
Concerning  the current state of DC plan
decumulation strategies, \citet{MacDonald2013} conclude {\em``There is no solution that is appropriate for
everyone and neither is there a single solution for any individual.''}

We should mention that there is a standard rule of thumb for DC plan
decumulation, termed the {\em four per cent rule}.  Based on historical
backtests, \citet{Bengen1994} suggests investing in a portfolio of
50\% bonds and 50\% stocks, and withdrawing 4\% of the initial 
capital each year (adjusted for inflation).  Over historical rolling year
30 year periods, this strategy would have never depleted the
portfolio.

Another recent strategy is based on the Annually Recalculated Virtual Annuity (ARVA)
\citep{Waring2015,West2015,Forsyth_Arva_a}.  The ARVA strategy determines the yearly
spending based on the theoretical value of a fixed term (virtual) annuity purchased
with the current portfolio wealth.  This approach is efficient in the sense
that the portfolio is exhausted at the end of the investment horizon, but
there is no guarantee of a yearly minimum withdrawal amount.

A recent survey\footnote{2017 Allianz Generations Ahead Study 
 - Quick Facts \#1. (2017), Allianz.} showed that
a majority of pre-retirees fear exhausting their savings in retirement more than death.
In addition, it is considered axiomatic amongst practitioners that 
retirees desire to have minimum (real) cash flows each year to fund
expenses \citep{Yamada2011}.  Typical (e.g. CRRA)  utility function based objective functions
do not directly focus on these two issues.

To address these two concerns, 
our objective in this article is to determine a decumulation strategy which
has the following characteristics.
\begin{itemize}
   \item Withdrawals can be variable, but with minimum and maximum constraints.
   \item The risk of portfolio depletion is minimized.
   \item The expected average withdrawal is maximized.
   \item The asset allocation strategy can be dynamic and non-deterministic.
\end{itemize}
We specify that the withdrawals are to take place over a fixed, lengthy (30 years)
decumulation horizon.  We do not explicitly take into account longevity risk,
which we recognize as a weakness of this strategy.  However, this is mitigated
(somewhat) by specifying a long decumulation period.
For example, the probability that a 65-year old Canadian male attains
the age of 95, is about $0.13$. 
\footnote{\url{www.cia-ica.ca/docs/default-source/2014/214013e.pdf}}

We pose the decumulation problem as an exercise in optimal stochastic
control.  We have two controls: the asset allocation and the withdrawal
amount.  These controls are time and state dependent.
Our objective function is composed of a measure of reward
and risk.  Our measure of reward is the expected total withdrawals (EW) over
the thirty year period.  Our measure of risk is expected shortfall (ES)
at the end of the decumulation period.  The ES at level x\% is the mean
of the worst x\% of outcomes.  The negative of ES is also known as
Conditional Value at Risk (CVAR) or Conditional Tail Expectation (CTE).

We emphasize that, in contrast to previous studies on DC plan decumulation \citep{forsyth_2019_a,Forsyth_Arva_a},
where deterministic withdrawal strategies are coupled with optimal asset allocation,
in this work, we determine both the optimal  withdrawal strategy and the optimal asset allocation.

We should note that a strategy using ES as a risk measure
is formally  a pre-commitment policy. 
Some authors have taken the point of view that
pre-commitment polices are not time consistent, hence non-implementable.
However, as noted in \citet{forsyth_2019_c},
the time zero strategy based on a pre-commitment ES policy is
identical to the strategy for an induced time consistent policy,
hence is implementable.\footnote{An implementable strategy has the property that
the investor has no incentive to deviate from the strategy computed at time zero
at later times \citep{forsyth_2019_c}.}  The induced time consistent strategy
in this case is a target based shortfall.  The concept of induced time consistent
strategies is discussed in \citet{Strub_2019_a}. In fact,
\citet{forsyth_2019_c} shows that enforcing a time consistent
constraint on policies which use ES as a risk measure has undesirable
consequences.  The relationship between pre-commitment and implementable
target based schemes in the mean-variance context is discussed in \citet{vigna:2014} and
\citet{Vigna_2017b}.

We assume that the retiree has an investment portfolio consisting
of a stock index and a bond index, and desires to maximize
real (inflation adjusted) total withdrawals.
We calibrate stochastic
models of real stock and bond indexes to historical data over
the 1926:1-2019:12 period.  We assume yearly withdrawals and
rebalancing of the DC account.  We term the market where the
assets follow the parametric model fit to the historical
data the {\em synthetic} market.

We devise a numerical method for determining the optimal policies.
We enforce realistic investment constraints (no shorting, no leverage)
and maximum and minimum constraints on the yearly withdrawal amounts.
Compared to a strategy with a  fixed withdrawal amount per year, we find that a variable withdrawal
strategy, with a minimum withdrawal set to the fixed withdrawal amount,
has a significantly increased expected average withdrawal, with only 
a very small increase in ES risk.

We also test the robustness of the strategy computed in the
synthetic market by carrying out tests using bootstrap
resampled historical data (the {\em historical market}).
The efficient EW-ES frontiers for both synthetic and
historical market tests are very close, indicating
that the strategy computed in the synthetic market is
robust to model misspecification.

\section{Formulation}
We assume that the investor has access to two funds: a broad market stock index fund
and a constant maturity bond index fund.  

The investment horizon is $T$.  Let $S_t$ and $B_t$ respectively denote the 
real (inflation adjusted) \emph{amounts} invested in the
stock index and the bond index respectively.  In general, these
amounts will depend on the investor's strategy over time,
as well as changes
in the real unit prices of the assets.
In the absence of an investor determined
control (i.e. cash withdrawals or rebalancing),
all changes in $S_t$
and $B_t$ result from changes in asset prices. We model the stock index as following
a jump diffusion.  

In addition, we follow the usual practitioner approach and directly model
the returns of the constant maturity bond index as a stochastic process,
see for example \citet{Lin_2015,mitchell_2014}.   
As in \citet{mitchell_2014}, we assume that the constant maturity bond
index follows a jump diffusion process as well.

Let $ S_{t^-} =   S(t - \epsilon), 
\epsilon \rightarrow 0^+$, i.e.\ $t^-$ is the instant of time before
$t$, and let $\xi^s$ be a random number representing a jump multiplier.
When a jump occurs, $S_t = \xi^s S_{t^-}$. 
Allowing for jumps permits modelling of non-normal asset returns.
We assume that $\log(\xi^s)$ follows a double exponential distribution
\citep{kou:2002,Kou2004}. If a jump occurs, $p_{\text{\emph{u}}}^s$ is
the probability of an upward jump, while $1-p_{\text{\emph{u}}}^s$ is the
chance of a downward jump. The density function for $y = \log (\xi^s)$ is
\begin{equation}
f^s(y) = p_{u}^s \eta_1^s e^{-\eta_1^s y} {\bf{1}}_{y \geq 0} +
       (1-p_{u}^s) \eta_2^s e^{\eta_2^s y} {\bf{1}}_{y < 0}~.
\label{eq:dist_stock}
\end{equation}
We also define
\begin{eqnarray}
\kappa^s &= &E[ \xi^s -1 ]
          =   \frac{p_{\text{\emph{{u}}}}^s \eta_1^s}{\eta_1^s - 1} + 
          \frac{ ( 1 - p_{ \text{\emph{u}} }^s ) \eta_2^s }{\eta_2^s + 1}  -1 ~.
\end{eqnarray}
In the absence of control, $S_t$ evolves according to
\begin{eqnarray}
\frac{dS_t}{S_{t^-}} &= &\left(\mu^s -\lambda_\xi^s \kappa_{\xi}^s \right) \, dt + 
  \sigma^s \, d Z^s +  d\left( \ds \sum_{i=1}^{\pi_t^s} (\xi_i^s -1) \right) ,
\label{jump_process_stock}
\end{eqnarray}
where $\mu^s$ is the (uncompensated) drift rate, $\sigma^s$ is the volatility,
$d Z^s$ is the increment of a Wiener process,
$\pi_t^s$ is a Poisson process with positive intensity parameter
$\lambda_\xi^s$, and $\xi_i^s$ are i.i.d.\ positive random variables having
distribution (\ref{eq:dist_stock}).
Moreover, $\xi_i^s$, $\pi_t^s$, and $Z^s$ are assumed to all be
mutually independent.

Similarly,  let the amount in the bond index be $B_{t^-} =  B(t - \epsilon), \epsilon \rightarrow 0^+$.
In the absence of control, $B_t$ evolves as
\begin{eqnarray}
\frac{dB_t}{B_{t^-}} &= &\left(\mu^b -\lambda_\xi^b \kappa_{\xi}^b  
   + \mu_c^b {\bf{1}}_{\{B_{t^-} < 0\}}  \right) \, dt + 
  \sigma^b \, d Z^b +  d\left( \ds \sum_{i=1}^{\pi_t^b} (\xi_i^b -1) \right) ,
\label{jump_process_bond}
\end{eqnarray}
where the terms in equation (\ref{jump_process_bond}) are defined analogously to
equation (\ref{jump_process_stock}).  In particular, $\pi_t^b$ 
is a Poisson process with positive intensity parameter
$\lambda_\xi^b$, and $\xi_i^b$ has distribution 
\begin{equation}
f^b( y= \log \xi^b) = p_{u}^b \eta_1^b e^{-\eta_1^b y} {\bf{1}}_{y \geq 0} +
       (1-p_{u}^b) \eta_2^b e^{\eta_2^b y} {\bf{1}}_{y < 0}~,
\label{eq:dist_bond}
\end{equation}
and $\kappa_{\xi}^b = E[ \xi^b -1 ]$.  $\xi_i^b$, $\pi_t^b$, and $Z^b$ are assumed to all be
mutually independent.  The term $\mu_c^b {\bf{1}}_{\{B_{t^-} < 0\}}$ in equation 
(\ref{jump_process_bond}) represents the extra cost of borrowing (the spread).

The diffusion processes are correlated, i.e. $d Z^s \cdot d Z^b = \rho_{sb} dt$.  The stock
and bond jump processes are assumed mutually independent.
See \citet{forsyth_2020_a} for justification of the assumption of stock-bond jump independence.

\begin{remark}[Stock and Bond Processes]
An obvious generalization of processes
(\ref{jump_process_stock}) and (\ref{jump_process_bond}) would be
to include stochastic volatility effects.  However,
previous studies have shown that stochastic volatility appears to have little consequences for 
long term investors \citep{Ma2015}.  As a robustness check, we will 
(i) determine the optimal controls using the parametric model based
on equations (\ref{jump_process_stock}) and (\ref{jump_process_bond}) and (ii) use
these controls on bootstrapped resampled historical data, which makes no assumptions
about the underlying bond and stock stochastic processes.
\end{remark}

We define the investor's total wealth at time $t$ as
\begin{equation}
\text{Total wealth } \equiv W_t = S_t + B_t.
\end{equation}
We impose the constraints that (assuming solvency) shorting stock and using leverage
(i.e.\ borrowing) are not permitted, which would be typical of a
DC plan retirement savings account.  In the event of insolvency (due to withdrawals), the portfolio
is liquidated, trading ceases and debt accumulates
at the borrowing rate.

\section{Notational conventions}
\label{adaptive_section}
Consider a set of discrete withdrawal/rebalancing times $\mathcal{T}$
\begin{eqnarray}
   \mathcal{T} = \{t_0=0 <t_1 <t_2< \ldots <t_M=T\}  \label{T_def}
\end{eqnarray}
where we assume that $t_i - t_{i-1} = \Delta t =T/M$ is constant for
simplicity.
To avoid subscript clutter, in the following, we will
occasionally use the notation $S_t \equiv S(t), B_t \equiv B(t)$ and
$W_t \equiv W(t)$.
Let the inception time of the investment be $t_0 = 0$. We let
$\mathcal{T} $ be the set of
withdrawal/rebalancing times, as defined in equation (\ref{T_def}).
At each rebalancing time
$t_i$, $i = 0, 1, \ldots, M-1$, the investor (i)~withdraws an amount of cash
$q_i$ from the portfolio, and then~(ii) rebalances the portfolio. At
$t_M = T$,  the final cash flow $q_M$ occurs, and the portfolio is liquidated.
In the following, given a time dependent function $f(t)$, then we will use
the shorthand notation
\begin{eqnarray}
  f(t_i^+) \equiv \displaystyle  \lim_{\epsilon \rightarrow 0^+}
          f(t_i + \epsilon) ~~&; & ~~
      f(t_i^-) \equiv \displaystyle  \lim_{\epsilon \rightarrow 0^+}
          f(t_i - \epsilon)  ~~.
\end{eqnarray}
We assume that there are no
taxes or other transaction costs, so that the condition
\begin{eqnarray}
   W(t_i^+) = W(t_i^-) - q_i &;& t_i \in \mathcal{T}
\end{eqnarray}
holds.  Typically, DC plan savings are held in a tax advantaged
account, with no taxes triggered by rebalancing.  With infrequent (e.g. yearly) rebalancing, we also
expect transaction costs to be small, and hence can be ignored.  It is possible to include
transaction costs, but at the expense of increased computational cost \citep{Van2018}.

We denote by
$X\left(t\right)=
\left( S \left( t \right), B\left( t \right) \right)$,
$t\in\left[0,T\right]$, the multi-dimensional controlled
underlying process, and by $x = (s, b)$
the realized state of the system.
Let the rebalancing control $p_i(\cdot)$ be  the fraction invested in the stock index
at the rebalancing date $t_i$, i.e.
\begin{eqnarray}
    p_i \left( X(t_i^-) \right) = 
       p \left( X(t_i^-), t_i \right) & = & \frac{ S(t_i^+)} {S(t_i^+) + B(t_i^+) } ~.
\end{eqnarray}

Let the withdrawal control $q_i(\cdot)$ be the amount withdrawn at time $t_i$, i.e.
$q_i \left( X(t_i^-) \right)  = q \left( X(t_i^-), t_i \right)$.
Note that formally,  the controls depend on the state
of the investment portfolio, before the rebalancing
occurs, i.e.
$p_i(\cdot) =  p\left(X(t_i^-), t_i)\right) 
= p\left(X_i^-, t_i \right)$, and
$q_i(\cdot) =  q\left(X(t_i^-), t_i)\right) 
= q\left(X_i^-, t_i \right)$,
$t_i \in \mathcal{T}$, where $\mathcal{T}$ is the
set of rebalancing times.

However, it will be convenient to note that
in our case, we find the optimal control $p_i(\cdot)$
amongst all strategies with constant wealth (after withdrawal of cash).
Hence, with some abuse of notation, we will now consider
$p_i(\cdot)$ to be function of wealth after withdrawal of cash
\begin{eqnarray}
  p_i(\cdot) &= & p(W(t_i^+), t_i) \nonumber \\
      & & W(t_i^+) = S(t_i^-) + B(t_i^-) - q_i(\cdot) \nonumber \\
      & &  S(t_i^+) = S_i^+ = p_i(W_i^+)~ W_i^+ ~~;~~ B(t_i^+) = B_i^+ =   (1 -p_i(W_i^+)) ~W_i^+
       ~~.  \label{p_def_2}
\end{eqnarray}
\begin{remark}[Control depends on wealth only]
Note that we assume no transaction costs.  If transaction costs are included, then the control $p_i(\cdot)$ would
in general be a function of the state $(S(t_i^-), B(t_i^-))$ \citep{dang-forsyth:2014a}.
\end{remark}
Note that since $p_i(\cdot) = p_i(W_i^- -q_i)$, then it follows that
\begin{eqnarray}
  q_i(\cdot) = q_i(W_i^-) \label{q_dependence}
\end{eqnarray}
which we will prove formally in a later section.

\begin{remark}[Instantaneous Rebalancing]
We assume that rebalancing occurs instantaneously.  Informally, this
has the consequence  that no jumps occur in the
unit prices of the stock and bond indexes over a  rebalancing period $(t_i^-, t_i^+)$.
\end{remark}

A control at time $t_i$, is then given by the pair $( q_i(\cdot), p_i(\cdot) )$ where the notation $(\cdot)$
denotes that the control is a function of the state.

Let $\mathcal{Z}$ represent the set of admissible
values of the controls $(q_i(\cdot), p_i(\cdot))$.
As is typical for a DC plan savings account, we impose no-shorting, no-leverage
constraints (assuming solvency).  We also impose maximum and minimum values for the withdrawals.
We apply the constraint that in the event of insolvency due to withdrawals ($W(t_i^+) < 0$),
trading ceases and debt (negative wealth) accumulates at the appropriate
bond rate of return (including a spread).  
We also specify that the stock assets are liquidated at $t=t_M$.

More precisely, let $W_i^+$ be the wealth after withdrawal of cash, then
define
\begin{eqnarray}
 \mathcal{Z}_q  & = &
                  [q_{\min}, q_{\max} ] ~;~   t \in \mathcal{T} 
             ~,  \label{Z_q_def}\\
    \mathcal{Z}_p (W_i^+,t_i) &=&
          \begin{cases}
                  [0,1] & W_i^+ > 0 ~;~ t_i \in \mathcal{T}~;~ t_i \neq t_M \\
                  \{0\} & W_i^+ \leq 0 ~;~ t_i \in \mathcal{T}~;~  t_i \neq t_M \\
                  \{0\} &  t_i=t_M
          \end{cases}    ~.  \label{Z_p_def} \\
\end{eqnarray}

The set of admissible values for $(q_i,p_i), t_i \in \mathcal{T}$,
can then be written a
\begin{eqnarray}
   (q_i,p_i) \in \mathcal{Z}(W_i^+,t_i) & = & \mathcal{Z}_q \times \mathcal{Z}_p (W_i^+,t_i)~.
  \label{admiss_set}
\end{eqnarray}
For implementation purposes, we have written equation (\ref{admiss_set}) in terms of the wealth after
withdrawal of cash.  However, we remind the reader that since $W_i^+ = W_i^- -q$, the controls
are formally a function of the state $X(t_i^-)$ before the control is applied.

The admissible control set $\mathcal{A}$ can then be written as
\begin{eqnarray}
  \mathcal{A} = \biggl\{
                  (q_i, p_i)_{0 \leq i \leq M} : (p_i, q_i) \in \mathcal{Z}(W_i^+,t_i) 
                \biggr\}
\end{eqnarray}
An admissible control $\mathcal{P} \in \mathcal{A}$, where $\mathcal{A}$ is
the admissible control set, can be written as,
\begin{eqnarray}
    \mathcal{P} = \{ (q_i(\cdot), p_i(\cdot) ) ~:~ i=0, \ldots, M \} ~.
\end{eqnarray}
We also define $\mathcal{P}_n \equiv \mathcal{P}_{t_n} \subset \mathcal{P}$
as the tail of the set of controls in $[t_n, t_{n+1}, \ldots, t_{M}]$, i.e.
\begin{eqnarray}
   \mathcal{P}_n =\{ (   q_n(\cdot) , p_n(\cdot) )  \ldots, (p_{M}(\cdot), q_{M}(\cdot) ) \} ~.
\end{eqnarray}
For notational completeness, we also define the tail of the admissible control set $\mathcal{A}_n$ as
\begin{eqnarray}
   \mathcal{A}_n = \biggl\{
                  ( q_i, p_i )_{n \leq i \leq M} : (q_i, p_i) \in \mathcal{Z}(W_i^+,t_i) 
                \biggr\}
\end{eqnarray}
so that $\mathcal{P}_n \in \mathcal{A}_n $.

\section{Risk and reward}
\subsection{A measure of risk: definition of expected shortfall (ES)}
Let $g(W_T)$ be the probability density function of wealth $W_T$ at $t=T$.
Suppose
\begin{linenomath*}
\begin{equation}
\int_{-\infty}^{W^*_{\alpha}}  g(W_T) ~dW_T = \alpha,
    \label{CVAR_def_a}
\end{equation}
\end{linenomath*}
i.e.\ \emph{Pr}$[W_T >  W^*_{\alpha}] = 1 - \alpha$.  We can interpret
$W^*_{\alpha}$ as the Value at Risk (VAR) at level $\alpha$ \footnote{In practice, the negative of $W^*_{\alpha}$ is often the
reported VAR.}.
The Expected Shortfall (ES) at level $\alpha$ is then
\begin{linenomath*}
\begin{equation}
{\text{ES}}_{\alpha} = \frac{\int_{-\infty}^{W^*_{\alpha}} W_T ~  g(W_T) ~dW_T }
                      {\alpha},
\label{ES_def_1}
\end{equation}
\end{linenomath*}
which is the mean of the worst $\alpha$ fraction of outcomes.
Typically $\alpha \in \{ .01, .05 \}$. Note that the definition of ES in
equation \eqref{ES_def_1} uses the probability density of the final
wealth distribution, not the density of \emph{loss}. Hence, in our case,
a larger value of ES (i.e.\ a larger value of average worst case terminal
wealth) is desired.  The negative of ES is commonly referred to as Conditional
Value at Risk (CVAR).

Define $X_0^+ = X(t_0^+), X_0^- = X(t_0^-)$.  Given an expectation under control $\mathcal{P}$, $E_{\mathcal{P}}
[\cdot]$, as noted by \citet{Uryasev_2000},
$ \text{ES}_{\alpha}$ can be alternatively written as
\begin{eqnarray}
  {\text{ES}}_{\alpha}( X_0^-, t_0^-)  & = & 
      \sup_{W^*} E_{\mathcal{P}_0}^{X_0^+, t_0^+}
   \biggl[W^* + \frac{1}{\alpha} \min(  W_T - W^* , 0 )
                       \biggr]  ~.
\label{ES_def}
\end{eqnarray}
The admissible set for $W^*$ in equation (\ref{ES_def}) is over
the set of possible values for $W_T$.  

Note that the notation ${\text{ES}}_{\alpha}( X_0^-, t_0^-) $
emphasizes that  ${\text{ES}}_{\alpha}$ is as seen at  $(X_0^-, t_0^-)$.  In other words,
this is the pre-commitment ${\text{ES}}_{\alpha}$.  A strategy based purely on optimizing
the pre-commitment value of ${\text{ES}}_{\alpha}$ at time zero  is {\em time-inconsistent},
hence has been termed by many as {\em non-implementable}, since the investor
has an incentive to deviate from the the pre-commitment strategy at $t >0$.
However, in the following, we will consider the pre-commitment strategy merely
as a device to determine an appropriate level of $W^*$ in equation (\ref{ES_def}).
If we fix $W^*$ $\forall t >0$, then this strategy is the induced time consistent
strategy \citep{Strub_2019_a}, hence is implementable.  We delay further discussion of this subtle point
to later sections.

\subsection{A measure of reward: expected total withdrawals (EW)}
We will use expected total withdrawals as a measure of reward in the following. More precisely, we define EW (expected withdrawals) as
\begin{eqnarray}
      {\text{EW}}( X_0^-, t_0^-) = E_{\mathcal{P}_0}^{X_0^+, t_0^+} 
                                 \biggl[ 
                                    \sum_{i=0}^{i=M} q_i
                                 \biggr] ~.  \label{EW_def}
\end{eqnarray}

\section{Objective function}
Since expected withdrawals (EW) and expected shortfall (ES) are conflicting measures,
we use a scalarization technique to find the Pareto points for this multi-objective
optimization problem.  Informally, for a given scalarization parameter $\kappa >0$,
we seek to find the control $\mathcal{P}_0$ that maximizes
\begin{eqnarray}
 {\text{EW}}( X_0^-, t_0^-)  + \kappa {\text{ES}}_{\alpha}( X_0^-, t_0^-)
     ~.  \label{objective_overview}
\end{eqnarray}
More precisely, we define the pre-commitment EW-ES problem $(PCEE_{t_0}(\kappa))$
problem in terms of the value function $J(s,b,t_0^-)$ 

\begin{eqnarray}
\left(\mathit{PCEE_{t_0}}\left(\kappa \right)\right):
    \qquad J\left(s,b,  t_{0}^-\right)  
    & = & \sup_{\mathcal{P}_{0}\in\mathcal{A}}
          \sup_{W^*}
        \Biggl\{
               E_{\mathcal{P}_{0}}^{X_0^+,t_{0}^+}
           \Biggl[ ~\sum_{i=0}^{i=M} q_i ~  + ~
              \kappa \biggl( W^* + \frac{1}{\alpha} \min (W_T -W^*, 0) \biggr)
                    \Biggr. \Biggr. \nonumber \\
         & &  \Biggl. \Biggl. ~~~~~
                \bigg\vert X(t_0^-) = (s,b)
                   ~\Biggr] \Biggr\}\label{PCEE_a}\\
    \text{ subject to } & &
               \begin{cases}
(S_t, B_t) \text{ follow processes \eqref{jump_process_stock} and \eqref{jump_process_bond}};  
 ~~t \notin \mathcal{T} \\
      W_{\ell}^+ = S_{\ell}^{-} + B_{\ell}^{-}  - q_\ell \,; ~ X_\ell^+ = (S_\ell^+ , B_\ell^+)  \\
   S_\ell^+ = p_\ell(\cdot) W_\ell^+ \,; 
 ~B_\ell^+ = (1 - p_\ell(\cdot) ) W_\ell^+ \, \\
    ( q_\ell(\cdot) , p_\ell(\cdot) )  \in \mathcal{Z}(W_\ell^+,t_\ell)  \\
    \ell = 0, \ldots, M ~;~ t_\ell \in \mathcal{T}  \\
               \end{cases}  ~.
\label{PCEE_b}
\end{eqnarray}

Interchange the $\sup \sup$ in equation (\ref{PCEE_a}), so that
value function $ J\left(s,b,  t_{0}^-\right)$ can be
written as
\begin{eqnarray}
\qquad J\left(s,b,  t_{0}^-\right)
     & = &  \sup_{W^*} \sup_{\mathcal{P}_{0}\in\mathcal{A}}
              \Biggl\{
               E_{\mathcal{P}_{0}}^{X_0^+,t_{0}^+}
           \Biggl[ ~\sum_{i=0}^{i=M} q_i ~  + ~
              \kappa \biggl( W^* + \frac{1}{\alpha} \min (W_T -W^*, 0) \biggr)
                \bigg\vert X(t_0^-) = (s,b)
                   ~\Biggr] \Biggr\} ~.  \nonumber \\
             \label{pcee_inter}
\end{eqnarray}
Noting that the inner supremum in equation (\ref{pcee_inter}) is a continuous
function of $W^*$, and noting that the optimal value of $W^*$ in equation (\ref{pcee_inter})
is bounded\footnote{This is the same as noting that a finite value at risk exists.  This easily
shown,  assuming $ 0 <\alpha < 1$, since our investment strategy uses no leverage and no-shorting.}, then define
\begin{eqnarray}
\mathcal{W}^*(s,b)  & = & \displaystyle \argmax_{W^*} \biggl\{
                      \sup_{\mathcal{P}_{0}\in\mathcal{A}} 
            \Biggl\{
               E_{\mathcal{P}_{0}}^{X_0^+,t_{0}^+}
           \Biggl[ ~\sum_{i=0}^{i=M} q_i ~  + ~
              \kappa \biggl( W^* + \frac{1}{\alpha} \min (W_T -W^*, 0) \biggr)
                \bigg\vert X(t_0^-) = (s,b)
                   ~\Biggr] \Biggr\} ~.
             \nonumber \\
            \label{pcee_argmax}
\end{eqnarray}
We refer the reader to \citet{forsyth_2019_c} for an extensive discussion
concerning pre-commitment and time consistent ES strategies.  We summarize
the relevant results from that research here.

Denote the investor's initial wealth at $t_0$ by $W_0^-$.  Then we have the following
result.
\begin{proposition}[Pre-commitment strategy equivalence to a time consistent
policy for an alternative objective function]\label{equiv_thm}
The pre-commitment EW-ES strategy $\mathcal{P}^*$ determined by solving
$J(0, W_0, t_0^-)$ (with $\mathcal{W}^*( 0, W_0^-)$ from equation (\ref{pcee_argmax}))
is the time consistent strategy for the equivalent
problem $TCEQ$ (with fixed $\mathcal{W}^*(0,W_0^-)$), with value function $\tilde{J}(s,b,t)$
defined by 
\begin{eqnarray}
 \left(\mathit{TCEQ_{t_n}}\left(\kappa / \alpha  \right)\right):
    \qquad \tilde{J}\left(s,b,  t_{n}^-\right)  
    & = &
            \sup_{\mathcal{P}_{n}\in\mathcal{A}}
        \Biggl\{
               E_{\mathcal{P}_{n}}^{X_n^+,t_{n}^+}
           \Biggl[ ~\sum_{i=n}^{i=M} q_i ~  + ~
               \frac{\kappa}{\alpha} \min (W_T -\mathcal{W}^*(0, W_0^-),0) 
                    \Biggr. \Biggr. \nonumber \\
         & &  \Biggl. \Biggl. ~~~~~ 
                \bigg\vert X(t_n^-) = (s,b)
                   ~\Biggr] \Biggr\}~.
             \label{timec_equiv}
\end{eqnarray}

\end{proposition}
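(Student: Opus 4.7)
The plan is to reduce the PCEE problem, after the interchange of suprema already carried out in the excerpt, to a standard stochastic control problem with a \emph{fixed} terminal target, and then invoke the Bellman principle for that reduced problem. First, starting from the interchanged form (\ref{pcee_inter}) evaluated at $(s,b) = (0, W_0^-)$, I substitute the maximizer $W^* = \mathcal{W}^*(0, W_0^-)$ taken from (\ref{pcee_argmax}). Because $\kappa W^*$ is then a constant independent of the admissible control $\mathcal{P}_0$, it factors out of the inner supremum, so the argmax over $\mathcal{P}_0$ in the PCEE problem at $t_0$ coincides with the argmax of
\begin{equation*}
\sup_{\mathcal{P}_0 \in \mathcal{A}} E_{\mathcal{P}_0}^{X_0^+, t_0^+}\Bigl[ \sum_{i=0}^{M} q_i + \frac{\kappa}{\alpha} \min\bigl(W_T - \mathcal{W}^*(0, W_0^-),\, 0\bigr) \,\Big|\, X(t_0^-) = (0, W_0^-) \Bigr],
\end{equation*}
which is precisely the TCEQ functional (\ref{timec_equiv}) at $n=0$ with target $\mathcal{W}^*(0, W_0^-)$.

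Next, I argue that TCEQ, once $\mathcal{W}^*(0, W_0^-)$ is treated as a fixed parameter, admits a standard dynamic programming solution. The reward is additive: a running term $\sum_{i=n}^{M} q_i$ plus a purely terminal term $(\kappa/\alpha)\min(W_T - \mathcal{W}^*, 0)$ that depends only on $W_T$. Between rebalancing dates $X_t = (S_t, B_t)$ is Markovian under (\ref{jump_process_stock})--(\ref{jump_process_bond}), and the admissible set $\mathcal{Z}(W_i^+, t_i)$ depends only on the current post-withdrawal wealth. Hence $\tilde{J}(s,b,t_n^-)$ satisfies a backward Bellman recursion with terminal condition
\begin{equation*}
\tilde{J}(s,b,t_M^-) = \sup_{q_M \in \mathcal{Z}_q} \Bigl\{ q_M + \frac{\kappa}{\alpha} \min\bigl(s + b - q_M - \mathcal{W}^*(0, W_0^-),\, 0\bigr) \Bigr\},
\end{equation*}
and the strategy it produces is time consistent by construction: its restriction to $[t_n, T]$ remains optimal for $\tilde{J}(\cdot, \cdot, t_n^-)$.

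Combining these two steps, the PCEE optimizer $\mathcal{P}^*$ at $t_0$ coincides with the $t_0$ slice of the time-consistent optimizer of TCEQ with fixed target $\mathcal{W}^*(0, W_0^-)$, and the Bellman recursion then propagates the identification to all later times, giving exactly the claim of the proposition. The main obstacle is not the dynamic programming itself but justifying that the reduction in step one is rigorous: one needs continuity of the inner value function in $W^*$, existence and boundedness of the maximizer (guaranteed in the excerpt by $0 < \alpha < 1$ together with the no-shorting/no-leverage constraints ensuring a finite VaR), and enough measurability/integrability of the $\min$ term that $\mathcal{W}^*(0, W_0^-)$ can be treated as a deterministic constant from the perspective of the downstream optimization.
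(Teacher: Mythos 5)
Your proposal is correct and takes essentially the same route as the paper, whose proof is a one-line deferral to the argument of Proposition 6.2 in the cited reference (Forsyth, 2019c): interchange the suprema, fix $W^*$ at its maximizer $\mathcal{W}^*(0,W_0^-)$, observe that the constant $\kappa W^*$ drops out of the argmax over controls, and note that the remaining problem is a standard additive-reward expectation satisfying the dynamic programming principle, hence time consistent. You have simply written out the details that the paper compresses into a citation, including the correct terminal condition at $t_M$ and the boundedness/continuity caveats for $\mathcal{W}^*$ that the paper itself flags in a footnote.
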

\begin{proof}
This follows  similar steps as in \citet{forsyth_2019_c}, proof of Proposition 6.2, with the exception
that the reward in \citet{forsyth_2019_c} is expected terminal wealth, while here the reward is
total withdrawals.
\end{proof}

\begin{remark}[An Implementable Strategy]
Given an initial level of wealth $W_0^-$ at $t_0$, then the optimal control
for the pre-commitment problem (\ref{PCEE_a}) is the same optimal control
for the time consistent problem $\left(\mathit{TCEQ_{t_n}}\left(\kappa / \alpha  \right)\right)$
(\ref{timec_equiv}), $\forall t >0$.  Hence we can regard problem
$\left(\mathit{TCEQ_{t_n}}\left(\kappa / \alpha  \right)\right)$
as the {\em EW-ES induced time consistent strategy}.
Thus, the induced strategy is implementable,
in the sense that the investor has no incentive to deviate
from the strategy computed at time zero, at later times \citep{forsyth_2019_c}.
\end{remark}

\begin{remark}[EW-ES Induced Time Consistent Strategy]
In the following, we will consider the  actual strategy followed by
the investor for any $t>0$ as given by the induced time consistent
strategy $\left(\mathit{TCEQ_{t_n}}\left(\kappa / \alpha  \right)\right)$
in equation (\ref{timec_equiv}), with a fixed value of $\mathcal{W}^*(0, W_0^-)$,
which is identical to the EW-ES strategy at time zero.
Hence, we will refer to this strategy in the following as the EW-ES strategy, 
with the understanding that this refers to strategy 
$\left(\mathit{TCEQ_{t_n}}\left(\kappa / \alpha  \right)\right)$
for any $t>0$.
\end{remark}

\section{Algorithm for optimal expected-withdrawals expected-shortfall (EW-ES) strategy}
\label{algo_section}

\subsection{Formulation}\label{formulation_section}
In order to solve problem $(PCEE_{t_0}(\kappa))$, our
starting point is equation (\ref{pcee_inter}),
where we have interchanged the $\sup \sup (\cdot) $ in
equation (\ref{PCEE_a}).  We expand the state space to $\hat{X} = (s,b,W^*)$,
and define the auxiliary function $V(s, b, W^*, t) \in \Omega = [0,\infty) \times (-\infty, +\infty) \times (-\infty, +\infty)
\times [0, \infty)$
\begin{eqnarray}
   V(s, b, W^*, t_n^-) & = & \sup_{\mathcal{P}_{n}\in\mathcal{A}_n}
        \Biggl\{
               E_{\mathcal{P}_{n}}^{\hat{X}_n^+,t_{n}^+}
           \Biggl[
                \sum_{i=n}^{i=M} q_i +
           \kappa
             \biggl(
                  W^* + \frac{1}{\alpha} \min((W_T -W^*),0) 
               \biggr)
              \bigg\vert \hat{X}(t_n^-) = (s,b, W^*)  \Biggr]
                   \Biggr\}~. \nonumber \\
                    ~ \label{expanded_1} \\
     \text{ subject to } & &
  \begin{cases}
(S_t, B_t) \text{ follow processes \eqref{jump_process_stock} and \eqref{jump_process_bond}};  
 ~~t \notin \mathcal{T} \\
      W_{\ell}^+ = S_{\ell}^{-} + B_{\ell}^{-}  - q_\ell \,; ~ X_\ell^+ = (S_\ell^+ , B_\ell^+)  \\
   S_\ell^+ = p_\ell(\cdot) W_\ell^+ \,; 
 ~B_\ell^+ = (1 - p_\ell(\cdot) ) W_\ell^+ \, \\
    ( q_\ell(\cdot), p_\ell(\cdot) )  \in \mathcal{Z}(W_\ell^+ ,t_\ell)   \\
    \ell = n, \ldots, M ~;~ t_\ell \in \mathcal{T}  \\
               \end{cases}  ~.
             \label{expanded_2}
\end{eqnarray}

Equation (\ref{expanded_1}) is a simple expectation.  Hence we can solve this auxiliary
problem using dynamic programming.  
Recalling the definitions of $\mathcal{Z}_p, \mathcal{Z}_q$ in
equations (\ref{Z_q_def}-\ref{Z_p_def}), then
the dynamic programming principle applied at $t_n \in \mathcal{T}$ would then imply
\begin{eqnarray}
     V(s,b,W^*, t_n^-) & = & 
                           \sup_{q \in \mathcal{Z}_q}  ~~~\sup_{p \in \mathcal{Z}_p(w^- - q, t) } 
                         \biggl\{ q+
                              \biggl[ 
                                    V(  (w^- -q) p ,  (w^- - q) (1-p) , W^*, t_n^+)
                                     \biggr]
                                                 \biggr\} \nonumber \\
                       & = & \sup_{ q \in Z_q}  \biggl\{ q + 
                                   \biggl[ \sup_{ p \in \mathcal{Z}_p( w^- - q, t)}
                                    V(  (w^- -q) p ,  (w^- - q) (1-p) , W^*, t_n^+)
                                     \biggr]
                                                 \biggr\} \nonumber \\
                        & & w^- = s+b  ~ . \label{dynamic_a}
\end{eqnarray}

Let $\xbar{V}$ denote the upper semi-continuous envelope of $V$.
The optimal control $p_n(w,W^*)$ at time $t_n$
is then determined from 
\begin{eqnarray}
     p_n(w, W^*) & = & \left\{
            \begin{array}{lc}
                      \displaystyle  
            \argmax_{ p^{\prime} \in [0,1] } {\xbar{V}}( w p^{\prime}, w (1 - p^{\prime}), W^*, t_n^+),
                          &  w > 0 ~;~ t_n \neq t_M\\
                           0, &  w   \leq 0 ~{\mbox{ or }} t_n = t_M
             \end{array}
                  \right.  ~.
             \label{expanded_3}
\end{eqnarray}
The control for $q$ is then determined from
\begin{eqnarray}
      q_n(w,W^*) & = & \argmax_{ q^{\prime} \in \mathcal{Z}_q }
                    \biggl\{
                       q^{\prime} 
                     + {\xbar{V}}( (w -q^{\prime}) p_n( w -q^{\prime} ) , W^*) ,
                          (w -q^{\prime}) ( 1 - p_n( w -q^{\prime} ) ), W^*) ,t_n^+)
                     \biggr\}  ~. \nonumber \\
                  \label{q_opt}
\end{eqnarray}
From the right hand sides of equation (\ref{expanded_3}) and equation (\ref{q_opt}), we have the following
result.
\begin{proposition}[Dependence of optimal controls]
For fixed $W^*$, the optimal control for $q_n(\cdot)$ is a function  only of the total
portfolio wealth before withdrawals $w^- = s + b$, i.e. $q_n = q_n(w^-, W^*)$, while the optimal control for $p_n(\cdot)$ is a
function only of the total portfolio wealth after withdrawals $w^+ = w^- - q_n( w^-, W^*)$, i.e. $p_n = p_n(w^+, W^*)$.
\end{proposition}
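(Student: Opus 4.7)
The plan is to read the claim directly off the Bellman-style equations (\ref{expanded_3}) and (\ref{q_opt}) by verifying that the inner maximization over $p$ in the dynamic programming recursion (\ref{dynamic_a}) can be folded into a function of the post-withdrawal wealth alone.

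First I would define the reduced value function
\[
V^{\star}(w^{+}, W^{*}, t_n^{+}) \;=\; \sup_{p' \in \mathcal{Z}_p(w^{+}, t_n)} \xbar{V}\!\left(w^{+} p',\; w^{+}(1 - p'),\; W^{*},\; t_n^{+}\right).
\]
Two observations make this well-defined as a function of $(w^{+}, W^{*})$ alone: (i) the admissible set $\mathcal{Z}_p(w^{+}, t_n)$ from equation (\ref{Z_p_def}) depends on the pre-rebalancing state $(s, b)$ only through the sum $w^{+} = s + b - q$, and (ii) the post-rebalancing state $\bigl(w^{+} p',\, w^{+}(1-p')\bigr)$ also depends on $(s, b)$ only through $w^{+}$. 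Consequently the argmax on the right-hand side of equation (\ref{expanded_3}) is a function of $(w^{+}, W^{*})$ only, which proves $p_n = p_n(w^{+}, W^{*})$.

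Next I would substitute this reduction back into equation (\ref{dynamic_a}) to obtain
\[
V(s, b, W^{*}, t_n^{-}) \;=\; \sup_{q \in \mathcal{Z}_q}\Bigl\{ q + V^{\star}(w^{-} - q, W^{*}, t_n^{+}) \Bigr\}, \qquad w^{-} = s + b,
\]
whose right-hand side depends on $(s, b)$ only through the sum $w^{-}$. Since $\mathcal{Z}_q$ in equation (\ref{Z_q_def}) is the fixed interval $[q_{\min}, q_{\max}]$ independent of $(s, b)$, the argmax defining $q_n$ in equation (\ref{q_opt}) is therefore a function of $(w^{-}, W^{*})$ only, establishing $q_n = q_n(w^{-}, W^{*})$.

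No serious obstacle is expected: the proposition is essentially a separation-of-variables observation built into the way the two controls enter the Bellman recursion. The only care needed is to confirm that both the admissibility sets and the post-control state depend on $(s, b)$ only through the appropriate wealth aggregate, which is immediate from equations (\ref{Z_q_def})--(\ref{Z_p_def}). The interchange of the two suprema in (\ref{dynamic_a}) itself is legitimate because the outer control $q$ and the inner control $p$ enter through disjoint stages of the timeline, with $q$ affecting only $w^{+}$ and $p$ acting subsequently on the scalar $w^{+}$.
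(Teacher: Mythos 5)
Your argument is correct and matches the paper's own reasoning, which simply reads the result off the right-hand sides of equations (\ref{expanded_3}) and (\ref{q_opt}); you have merely made explicit the observations (that $\mathcal{Z}_p$ and the post-rebalancing state depend on $(s,b)$ only through $w^{+}$, and that $\mathcal{Z}_q$ is state-independent) which the paper leaves implicit. No gap.
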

The solution is advanced (backwards) across time $t_n$ by
\begin{eqnarray}
    V( s, b, W^*,t_n^-) & = & q_n( w^-, W^*) + {\xbar{V}}( w^+ p_n(w^+,W^*), w^+(~ 1 - p_n(w^+,W^*) ~), W^*, t_n^+)
                           \nonumber \\
                        & &~~~~~~ w^- = s + b~;~ w^+ = s + b - q_n(w^-, W^*) ~. \nonumber \\
                 \label{expanded_4}
\end{eqnarray}
At $t=T$, we have
\begin{eqnarray}
V(s, b, W^*,T^+) & = &  \kappa \biggl( 
                               W^* + \frac{\min( (s+b -W^*), 0 )}{\alpha} 
                               \biggr) ~. \label{expanded_5}
\end{eqnarray}
For $t \in(t_{n-1},t_n)$, there are no cash flows, discounting (all quantities are inflation
adjusted), or controls applied.  Hence the tower property gives
for $0 <h < (t_n - t_{n-1})$
\begin{eqnarray}
   V(s,b,W^*, t) & = & E\biggl[ 
                   V( S(t+h), B(t+h), W^*, t+h) \big\vert S(t) = s, B(t) = b 
                         \biggr] ~;~ t \in( t_{n-1}, t_n -h) ~. \nonumber \\
        \label{expanded_6}
\end{eqnarray}
Applying Ito's Lemma for jump processes \citep{Cont_book}, noting equations (\ref{jump_process_stock})
and (\ref{jump_process_bond}), and letting $h \rightarrow 0$  gives,
for $t \in (t_{n-1}, t_n)$
\begin{eqnarray}
& & V_t +  \frac{ (\sigma^s)^2 s^2}{2} V_{ss} +( \mu^s - \lambda_{\xi}^s \kappa_{\xi}^s) s V_s
       + \lambda_{\xi}^s \int_{-\infty}^{+\infty} V( e^ys, b, t) f^s(y)~dy 
      + \frac{ (\sigma^b)^2 b^2}{2} V_{bb} 
                  \nonumber \\
     & & ~~~     + ( \mu^b - \lambda_{\xi}^b \kappa_{\xi}^b) b V_b       
                  + \lambda_{\xi}^b \int_{-\infty}^{+\infty} V( s, e^yb, t) f^b(y)~dy 
       -( \lambda_{\xi}^s + \lambda_{\xi}^b )  V + \rho_{sb} \sigma^s \sigma^b s b V_{sb} 
       =0  ~. \nonumber \\
      \label{expanded_7}
\end{eqnarray}

\begin{proposition}[Equivalence of formulation (\ref{expanded_1}-\ref{expanded_7}) to problem $(PCEE_{t_0}(\kappa))$]
Define 
\begin{eqnarray}
  J(s,b,t_0^-) = \sup_{W^\prime} V(s,b,W^{\prime},t_0^-)~,
             \label{expanded_equiv}
\end{eqnarray}
then formulation (\ref{expanded_1}-\ref{expanded_7}) is equivalent to problem $(PCEE_{t_0}(\kappa))$.
\end{proposition}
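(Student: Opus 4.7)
The plan is to read the claim as two coupled equivalences. First, the function $J$ defined by (\ref{expanded_equiv}) must coincide with the value function of $(PCEE_{t_0}(\kappa))$ given in (\ref{PCEE_a}). Second, the auxiliary function $V$ defined by the expectation in (\ref{expanded_1}) must in fact be the object obtained by solving the dynamic-programming system (\ref{dynamic_a})-(\ref{expanded_7}). Once both pieces are established, the proposition follows immediately.

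For the first equivalence I would start from (\ref{pcee_inter}), which the text has already derived from (\ref{PCEE_a}) by interchanging the outer suprema. That interchange is legitimate because the bracketed quantity inside $E_{\mathcal{P}_0}^{X_0^+,t_0^+}[\cdot]$ is continuous and concave (in fact piecewise linear) in $W^*$, and because the no-shorting / no-leverage constraint makes $W_T$ bounded in $L^1$, so the optimizer $\mathcal{W}^*(s,b)$ in (\ref{pcee_argmax}) lies in a compact interval (this is the finite-VaR observation made in the text). At each fixed $W^*$, the inner supremum in (\ref{pcee_inter}) is exactly the definition of $V(s,b,W^*,t_0^-)$ in (\ref{expanded_1}). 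Hence
\[
J(s,b,t_0^-) \;=\; \sup_{W^*}\, V(s,b,W^*,t_0^-),
\]
which is precisely (\ref{expanded_equiv}).

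For the second equivalence I would carry out a standard verification argument showing that the $V$ defined by the expectation satisfies the proposed DP system. Since $W^*$ evolves trivially, the augmented state $\hat{X}=(s,b,W^*)$ is a controlled jump diffusion between rebalancing dates obeying (\ref{jump_process_stock})-(\ref{jump_process_bond}), with impulse controls $(q_n,p_n)$ applied only at $t_n\in\mathcal{T}$ and terminal payoff $\kappa\bigl(W^*+\alpha^{-1}\min(W_T-W^*,0)\bigr)$. The terminal condition (\ref{expanded_5}) is then just $V$ evaluated at $T^+$. The discrete Bellman step (\ref{dynamic_a}) follows from the tower property applied across $t_n$, combined with the wealth bookkeeping $W(t_n^+)=W(t_n^-)-q_n$ and the allocation rule $S(t_n^+)=p_n W(t_n^+)$, $B(t_n^+)=(1-p_n)W(t_n^+)$. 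The optimizers (\ref{expanded_3})-(\ref{q_opt}) and the update (\ref{expanded_4}) come from measurable selection of the pointwise argmaxima, which is precisely why the upper-semicontinuous envelope $\xbar V$ must be used. Finally, on $(t_{n-1},t_n)$ no controls act, so the tower property (\ref{expanded_6}) together with Ito's lemma for jump processes yields the PIDE (\ref{expanded_7}); no $W^*$-derivatives appear because $W^*$ is constant in time.

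The main obstacles are, in order of difficulty, (i) the measurable-selection step that turns the pointwise argmaxima in (\ref{expanded_3})-(\ref{q_opt}) into genuinely admissible feedback controls, which is precisely what forces the introduction of $\xbar V$ and careful use of the admissible sets $\mathcal{Z}_q$, $\mathcal{Z}_p$ in (\ref{Z_q_def})-(\ref{Z_p_def}); and (ii) the rigorous sup-sup interchange, which although standard requires verifying that the outer supremum over $W^*$ is attained on a compact set. Once these two technical points are secured, the PIDE between rebalancing dates and the Bellman recursion at rebalancing dates are both classical consequences of the tower property in the jump-diffusion setting, and combining them with (\ref{expanded_equiv}) gives the claimed equivalence.
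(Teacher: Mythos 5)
Your proposal is correct and follows essentially the same route as the paper: the paper's (very terse) proof likewise starts from the terminal condition (\ref{expanded_5}), unwinds the dynamic-programming recursion backwards in time to recover the expectation (\ref{expanded_1}), and then interchanges $\sup_{W^{\prime}}\sup_{\mathcal{P}}$ in the final step to arrive at (\ref{PCEE_a})--(\ref{PCEE_b}). You simply run the same argument in the forward direction and make explicit the technical points (tower property, measurable selection via $\xbar{V}$, attainment of the outer supremum) that the paper leaves implicit.
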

\begin{proof}
Replace $V(s,b,W^{\prime},t_0^-)$ in equation (\ref{expanded_equiv}) by the expressions in
equations (\ref{expanded_1}-\ref{expanded_7}).  Begin with  equation (\ref{expanded_5}), and recursively
work backwards in time, then we obtain equations (\ref{PCEE_a}-\ref{PCEE_b}), 
by interchanging  $\sup_{W^{\prime}} \sup_{\mathcal{P}} $ in the final step.
\end{proof}

\section{Continuous withdrawal/rebalancing limit}
In order to develop some intuition about the nature of the optimal controls,
we will examine the limit as the rebalancing interval becomes vanishingly small.

\begin{proposition}[Bang-bang withdrawal control in the continuous withdrawal limit]
\label{bang_bang_prop}
Assume that
\begin{itemize}
   \item the stock and bond processes follow (\ref{jump_process_stock})
and (\ref{jump_process_bond}),

  \item the portfolio is continuously rebalanced, and withdrawals occur at
     a continuous (finite) rate $\hat{q} \in [\hat{q}_{\min}, \hat{q}_{\max}]$,

  \item the HJB equation for the EW-ES problem in the continuous rebalancing limit has
        bounded derivatives w.r.t. total wealth,

  \item in the event of ties for the control $\hat{q}$, the smallest withdrawal is selected,
\end{itemize}
then the optimal withdrawal control $\hat{q}^*(\cdot)$ for the EW-ES problem
$(PCEE_{t_0}(\kappa))$ is bang-bang,
$\hat{q}^* \in \{\hat{q}_{\min}, \hat{q}_{\max} \}$.
\end{proposition}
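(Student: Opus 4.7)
The plan is to pass to the continuous-rebalancing, continuous-withdrawal limit of the dynamic programming principle (\ref{dynamic_a}), derive the Hamilton--Jacobi--Bellman equation governing $V(s,b,W^*,t)$, and exploit the fact that the instantaneous withdrawal rate $\hat{q}$ enters the resulting Hamiltonian affinely. Since an affine function on a compact interval attains its maximum at an endpoint, the bang-bang conclusion then follows immediately.

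First, I would substitute $q = \hat{q}\,\Delta t$ in the discrete dynamic programming step (\ref{dynamic_a}), Taylor-expand the post-control value $V(w^+p, w^+(1-p), W^*, t_n^+)$ to first order in $\hat{q}\,\Delta t$, and stitch this together with the PIDE (\ref{expanded_7}) that governs $V$ between rebalancing instants. Sending $\Delta t\to 0$ formally yields the HJB
\begin{equation*}
V_t \;+\; \sup_{p\in[0,1]} \mathcal{L}^p V \;+\; \sup_{\hat{q}\in[\hat{q}_{\min},\hat{q}_{\max}]} \hat{q}\,\bigl(1 - pV_s - (1-p)V_b\bigr) \;=\; 0,
\end{equation*}
on $(0,T)\times\{s+b>0\}$, with terminal condition (\ref{expanded_5}); here $\mathcal{L}^p$ abbreviates the compensated drift--diffusion--jump generator assembled from the coefficients of (\ref{jump_process_stock})--(\ref{jump_process_bond}) under continuous rebalancing at fraction $p$ in stocks. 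The first-order change in $V$ attributable to the withdrawal comes from the elementary observation that, in an interval of length $dt$, withdrawing at rate $\hat{q}$ subtracts $p\hat{q}\,dt$ from $s$ and $(1-p)\hat{q}\,dt$ from $b$, while adding $\hat{q}\,dt$ to the cumulative reward. The third hypothesis of the proposition guarantees that $V_s$ and $V_b$ are finite wherever the HJB is evaluated, so the coefficient of $\hat{q}$ is well defined.

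The remainder is immediate. For any fixed $(s,b,W^*,t,p)$ set
\begin{equation*}
\gamma(s,b,W^*,t;p) \;=\; 1 - pV_s(s,b,W^*,t) - (1-p)V_b(s,b,W^*,t).
\end{equation*}
The map $\hat{q}\mapsto \hat{q}\,\gamma$ is affine on the compact interval $[\hat{q}_{\min},\hat{q}_{\max}]$, so its maximizer is $\hat{q}_{\max}$ when $\gamma>0$, $\hat{q}_{\min}$ when $\gamma<0$, and, by the fourth hypothesis, $\hat{q}_{\min}$ when $\gamma=0$. Since the joint supremum over $(p,\hat{q})$ can be computed by optimizing over $\hat{q}$ for each fixed $p$ and then optimizing over $p$, this pointwise argument delivers $\hat{q}^*(\cdot)\in\{\hat{q}_{\min},\hat{q}_{\max}\}$, which is precisely the bang-bang assertion.

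The principal obstacle is not the bang-bang step itself but the justification of the continuous-limit HJB: one must argue that the discrete-time value functions of (\ref{expanded_1})--(\ref{expanded_6}) converge, as $\Delta t\to 0$, to a viscosity solution of the HJB displayed above, and that the joint supremum over $(p,\hat{q})$ may be taken pointwise rather than sequentially. A standard Barles--Souganidis-type convergence argument for controlled jump diffusions should suffice; the assumed boundedness of the wealth derivatives of $V$ is exactly what rules out a singular contribution from the withdrawal control and keeps the linear-in-$\hat{q}$ Hamiltonian well posed. Once that machinery is in place, the affinity-plus-compactness argument above closes the proof.
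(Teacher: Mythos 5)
Your proposal is correct and follows essentially the same route as the paper: pass to the continuous-rebalancing/withdrawal HJB equation, observe that $\hat{q}$ enters the Hamiltonian affinely with coefficient $1-\hat{V}_w$ (your $1-pV_s-(1-p)V_b$ reduces to this once one notes that under continuous rebalancing the value function depends on $s,b$ only through $w=s+b$, which is how the paper sets it up), and conclude that the maximizer sits at an endpoint, with ties broken to $\hat{q}_{\min}$. The paper likewise treats the limiting HJB formally (writing it down for the no-jump case ``following the usual arguments'' and simply assuming $\hat{V}_w$ exists and is bounded), so your flagged Barles--Souganidis gap is present in both arguments at the same level of rigor.
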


\begin{proof}
We consider (for ease of exposition) the case where the stock and bond funds follow geometric
Brownian motion (i.e. no jumps).  The analysis below can be easily (although tediously)
extended to the case of processes (\ref{jump_process_stock}) and (\ref{jump_process_bond}).
Consequently, we assume that the stock $S_t$ and bond $B_t$ index processes are
\begin{eqnarray}
\frac{dS_t}{S_{t}} = \mu^s~  dt +
  \sigma^s ~ d Z^s ~ & ; &
\frac{dB_t}{B_{t}} = \mu^b ~dt
  + \sigma^b ~ d Z^b  ~.
\label{gbm_process} 
\end{eqnarray}
with $d Z^s \cdot d Z^b = \rho_{sb}~dt$.
Assume that rebalancing is carried out continuously, and let
\begin{eqnarray}
   \hat{p}( W_t, t) & = & \frac{ S_t}{S_t + B_t}~,
\end{eqnarray}
with continuous withdrawal of cash at a rate of $\hat{q}(W_t, t)$.   The SDE for the total wealth process
$W_t = S_t + B_t$ is then
\begin{eqnarray}
   dW_t & = &  \hat{p} W_t  \frac{dS_t}{S_{t}} + (1-\hat{p})  W_t  \frac{dB_t}{B_{t}} -\hat{q}~dt ~.
\end{eqnarray}
It is important to note that here $\hat{q}$ is a {\em rate} of cash withdrawal, whereas
we have previously defined $q$ as a finite {\em amount} of cash withdrawal.
Define the following  sets of admissible values
of the controls
\begin{eqnarray}
   \hat{\mathcal{Z}}_q  & = &
                  [ \hat{q}_{\min}, \hat{q}_{\max} ] ~;~   t \in [0,T] 
             ~,  \label{Z_q_def_cont} \\
    \hat{\mathcal{Z}}_p (W_t,t) &=&
          \begin{cases}
                  [0,1] & W_t > 0 ~;~ t \in [0,T]~;~ t \neq T \\
                  \{0\} & W_t \leq 0 ~;~ t \in [0,T]~;~  t \neq T \\
                  \{0\} &  t=T
          \end{cases}    ~.  \label{Z_p_def_cont} 
\end{eqnarray}
We define the value function $\hat{V}(w, W^*, t)$ on the domain $\hat{\Omega} = 
(-\infty, +\infty) \times (-\infty, +\infty) \times (0, \infty) $ for fixed $W^*$ as
\begin{eqnarray}
& & \hat{V}(w, W^*, t)  =  \nonumber \\
& & \sup_{ \hat{p}(\cdot) \in \hat{\mathcal{Z}}_p}
                       \sup_{ \hat{q}(\cdot) \in \hat{\mathcal{Z}}_q}
        \Biggl\{
               E_{( \hat{p}, \hat{q}) }^{(W_t, W^*,t)}
           \Biggl[
                \int_t^T \hat{q}~dt +
           \kappa
             \biggl(
                  W^* + \frac{1}{\alpha} \min((W_T -W^*),0)
               \biggr)
              \bigg\vert (W_t,W^*) = (w, W^*)  \Biggr]
                   \Biggr\}~. \nonumber \\
                    ~ \label{expanded_1_cont} 
\end{eqnarray}
The continuous rebalancing, continuous withdrawal EW-ES problem is then posed as determining
$J(w, t_0)$ which is given by
\begin{eqnarray}
  \hat{J}(w,t_0) & = & \sup_{W^*} \hat{V}( w, W^*, t_0) ~. \label{J_cont}
\end{eqnarray} 
Following the usual arguments  we obtain the Hamilton-Jacobi-Bellman PDE for $\hat{V}$ 
\begin{eqnarray}
  & & \hat{V}_t + \sup_{ \hat{p} \in \hat{\mathcal{Z}}_p}
                       \sup_{ \hat{q} \in \hat{\mathcal{Z}}_q} 
 \biggl\{
      w \biggl[ \hat{p} \mu^s +(1-\hat{p}) \mu^b \biggr] \hat{V}_w -\hat{q} \hat{V}_w  + \hat{q} 
      + w^2 \biggl[ \frac{ (\hat{p} \sigma^s)^2}{2} + (1-\hat{p})\hat{p} \sigma^s \sigma^b \rho_{sb} 
            + \frac{ ( (1-\hat{p}) \sigma^b)^2}{2} \biggr]  \hat{V}_{ww} \biggr\} \nonumber \\
 & &  = 0 ~,
     \label{cont_hjb}
\end{eqnarray}
with terminal condition
\begin{eqnarray}
  \hat{V}(w, T, W^*) & = & 
         \kappa
             \biggl(
                  W^* + \frac{1}{\alpha} \min((W_T -W^*),0)
               \biggr) ~.
\end{eqnarray}
In general, we seek the viscosity solution of equation (\ref{cont_hjb}), which does not
require that the solution $\hat{V}$ be differentiable.  However, we make the assumption
that $\hat{V}_w$ exists and is bounded.

Rewriting equation (\ref{cont_hjb}) we have
\begin{eqnarray}
  & & \hat{V}_t + \sup_{ \hat{p} \in \hat{\mathcal{Z}}_p}
 \biggl\{
      w \biggl[ \hat{p} \mu^s +(1-\hat{p}) \mu^b \biggr] \hat{V}_w  
      + w^2 \biggl[ \frac{ (\hat{p} \sigma^s)^2}{2} + (1-\hat{p})\hat{p} \sigma^s \sigma^b \rho_{sb} 
       + \frac{ ( (1-\hat{p}) \sigma^b)^2}{2} \biggr]  \hat{V}_{ww} \biggr\} \nonumber \\
    & & ~~~~~~~~+ \sup_{ \hat{q} \in \hat{\mathcal{Z}}_q} \biggl\{
                                       \hat{q} (1 -\hat{V}_w) 
                                              \biggr\}  = 0 ~,
     \label{cont_hjb_2}
\end{eqnarray}
and therefore the
optimal value of  $\hat{q}$ is determined by maximizing
\begin{eqnarray}
      \sup_{ \hat{q} \in \hat{\mathcal{Z}}_q}  \hat{q} (1 -\hat{V}_w) ~.\label{opt_q_cont}
\end{eqnarray}
Breaking ties by choosing $\hat{q} = \hat{q}_{\min}$ if $ (1-\hat{V}_w) =0$, we then have that
the optimal strategy $\hat{q}^*$ is
\begin{eqnarray}
      \hat{q}^* & = & \begin{cases}
                        \hat{q}_{\min} ~;& (1-\hat{V}_w) \leq 0 \\
                         \hat{q}_{\max} ~;& (1-\hat{V}_w) > 0 
                      \end{cases}
             ~. \label{bang_bang_1}
\end{eqnarray} 
Equation \ref{bang_bang_1} holds for any $W^*$ and hence is also true for the optimal
value of  $W^*$ in equation (\ref{J_cont}).
We obtain the same result (after some algebraic complexity) if we assume that the stock
and bond processes are given in equation (\ref{jump_process_stock})
and  equation (\ref{jump_process_bond}).  
\end{proof}

\begin{remark}[Bang-bang control for discrete rebalancing/withdrawals]
\label{quasi_prop}
Proposition \ref{bang_bang_prop} suggests that, for sufficiently 
small rebalancing intervals, we can expect the
optimal $q$ control (finite withdrawal amount) to be bang-bang.  However, it is not clear that this will continue to
be true for the case of annual rebalancing (which we specify in our numerical examples).  In fact,
we do observe that the $q$ control is very close to bang-bang in our numerical experiments,
even for annual rebalancing. We term this control to be {\em quasi-bang-bang}. 
\end{remark}

\subsection{Numerical algorithm: $(PCEE_{t_0}(\kappa))$}\label{control_algo}

\subsubsection{Solution of auxiliary problem}
We solve the auxiliary problem (\ref{expanded_1}-\ref{expanded_2}), with
a fixed values of $W^*$, $\kappa$ and $\alpha$.
We do not allow shorting of stock, so  the amount in the stocks $S(t) \geq 0$.  We discretize
the state space in $s > 0$ using $n_{\hat{x}}$ equally spaced nodes in the $\hat{x} = \log s$ direction, on a finite
localized domain $ s \in [ e^{\hat{x}_{\min}}, e^{ \hat{x}_{\max}}]$.  
We discretize the state space in $b >0$ using $n_y$ equally spaced nodes in the
$y= \log b$ direction, on a finite localized domain $ b \in [b_{\min}, b_{\max}] = [e^{y_{\min}}, e^{y_{\max}}]$.  
We also define a $b^{\prime} >0$ grid,  using $n_b$ equally spaced nodes in the
$y^{\prime} = \log b^{\prime}$ direction, on the localized domain with 
$b^{\prime} \in  [b^{\prime}_{\min}, b^{\prime}_{\max}] = [e^{y_{\min}}, e^{y_{\max}}]$.  
The grid $[s_{\min}, s_{\max}] \times [b_{\min}, b_{\max}]$
represents cases where $b \geq0$.  The grid $[s_{\min}, s_{\max}] \times [b_{\min}^{\prime}, b_{\max}^{\prime}]$
represents cases where $b = -b^{\prime} < 0$.

We use  the Fourier methods discussed in \citet{forsythlabahn2017} to solve PIDE (\ref{expanded_7})
between rebalancing times.  Further details concerning the Fourier method can be found in \citet{forsyth_2020_a}.

We choose the localized domain $[\hat{x}_{\min}, \hat{x}_{\max}] = [\log(10^2) -8, \log(10^2)+8]$,
with $[y_{\min}, y_{\max}] = [\hat{x}_{\min}, \hat{x}_{\max}]$ (units thousands of dollars).  Wrap-around
effects are minimized using the domain extension method in \citet{forsythlabahn2017}.   In our numerical experiments,
we carried out tests replacing $[\hat{x}_{\min}, \hat{x}_{\max}]$ by $[\hat{x}_{\min} -2, \hat{x}_{\max} +2]$
and similarly replacing $[y_{\min}, y_{\max}]$ by $[y_{\min} -2, y_{\max} +2]$.  In all cases,
this resulted in changes to the summary statistics in at most the fifth digit, verifying that
the localization error is small.

We discretize the $p$ controls using an equally spaced grid with $n_y$ values.
We then solve the optimization problem (\ref{expanded_3}) using exhaustive search over
the discretized $p$ values, linearly interpolating the right hand side 
discrete values of $V$  in equation (\ref{expanded_3})
as required.  We store the optimal control for $p$ at $n_y$ discrete
wealth nodes.    We also discretize the controls for $q$ in the range $[q_{\min}, q_{\max}] $
in increments of one thousand dollars, and determine the optimal control for $q$
by exhaustive search.  
We then determine the optimal control for $q$ using equation (\ref{q_opt}), at a set
of $n_y$ discrete $w$ nodes.  We use the previously stored controls for $p$ in order
to evaluate the right hand side of equation (\ref{q_opt}), linearly interpolating the controls
if necessary.

We use a fixed discretization of the $q$ controls
since it is realistic to assume that retirees will change withdrawal amounts in fairly coarse increments.
As we shall see, as suggested by Proposition \ref{bang_bang_prop}, the $q$ control turns out
to be quasi-bang-bang, hence the discretization of the $q$ control 
hardly makes any difference to the solution.  

Finally, stored controls for $q$ and $p$ are then used to advance the solution in equation (\ref{expanded_4}), 
linearly interpolating
the controls and value function if required.

Assume that $n_{\hat{x}} = O(n_y)$.  Then, the cost of using an FFT method to solve equation (\ref{expanded_7})
between rebalancing times is $O( n_y^2 \log n_y)$.  The cost of determining the optimal control
for $p$ in using equation (\ref{expanded_4}) at $n_y$ discrete $w$ values, using exhaustive search,
is $O(n_y^2)$.  The cost of determining the optimal $q$ using equation (\ref{q_opt}) at $n_y$
discrete $w$ values is $O(n_y)$, since the number of discrete $q$ controls is $O(1)$.
In addition, the step (\ref{expanded_4}) has complexity $O(n_y^2)$.
The total number of rebalancing times is fixed, hence the total complexity of the
solution of problem (\ref{expanded_1}) for a fixed value of $W^*$ is $O(n_y^2 \log n_y)$.

\subsubsection{Outer optimization over $W^*$}
Given an approximate solution of the auxiliary problem (\ref{expanded_1}-\ref{expanded_2}) 
at $t=0$, which we denote by $V(s, b, W^*, 0)$, we then determine the final solution
for problem  $PCEE_{t_0}(\kappa))$  in equations (\ref{PCEE_a}-\ref{PCEE_b}) using
equation (\ref{expanded_equiv}).  More specifically, we solve
\begin{eqnarray}
    J( 0, W_0, 0^-) & = & \sup_{W^{\prime}} V(0, W_0, W^{\prime}, 0^-) \nonumber \\
                    & &  W_0 = {\text{ initial wealth}} ~.
                 \label{outer_opt}
\end{eqnarray}
We solve the auxiliary problem on sequence of grids $n_{\hat{x}} \times  n_y$.  On the coarsest
grid, we discretize $W^*$ and solve problem (\ref{outer_opt}) by exhaustive search.
We use this optimal value of $W^*$ as a starting point to a one dimensional optimization
algorithm on a sequence of finer grids.  

This approach does not guarantee that we have the globally optimal solution to 
problem (\ref{outer_opt}), since the problem is not guaranteed to be convex.  However, we
have made a few tests by carrying out a grid search on the finest grid, which suggest
that we do indeed have the globally optimal solution.

\subsubsection{Stabilization}
If $W_t \gg  W^* $,
and $t \rightarrow T$, then $Pr[ W_T < W^*] \simeq 0$ (recall that $W^*$ is fixed for
problem $\left(\mathit{TCEQ_{t_n}}\left(\kappa / \alpha \right)\right)$ (\ref{timec_equiv}) ). In addition, 
for large values of $W_t$, the withdrawal will be capped at $q_{\max}$.
In this fortuitous situation for the retiree,
the control only weakly effects the objective function.  To avoid
this ill-posedness for the controls,
we
changed the objective function (\ref{PCEE_a}) to
\begin{eqnarray}
    \qquad J\left(s,b,  t_{0}^-\right)
    & = & \sup_{\mathcal{P}_{0}\in\mathcal{A}}
          \sup_{W^*}
        \Biggl\{
               E_{\mathcal{P}_{0}}^{X_0^+,t_{0}^+}
           \Biggl[ ~\sum_{i=0}^{i=M} q_i ~  + ~
              \kappa \biggl( W^* + \frac{1}{\alpha} \min (W_T -W^*, 0) \biggr)
                  \overbrace{+ \epsilon W_T}^{stabilization}
                    \Biggr. \Biggr.  \nonumber \\
         & &  \Biggl. \Biggl. ~~~~~~~~~~~~~~~
                \bigg\vert X(t_0^-) = (s,b)
                   ~\Biggr] \Biggr\} ~. \label{stable_objective}
\end{eqnarray}
We used the value $\epsilon = +10^{-6}$ in the following test cases.  
Using a positive value for $\epsilon$ has the effect of forcing the
strategy to invest in stocks when $W_t$ is very large, and $t \rightarrow T$,
when the control problem is ill-posed.
In other words, when the probability  that $W_T$ is less than $W^*$ is very small, then the ES risk
is practically zero, hence the investor might as well invest in risky assets.
There is little to lose, and much to gain (at least for the retiree's estate).
Note that
using this small value of $\epsilon = 10^{-6}$ gave the same results as
$\epsilon = 0$ for the summary statistics, to four digits.  This is simply because
the states with very large wealth have low probability.  However, this stabilization
procedure produced more smooth heat maps for large wealth values, without altering
the summary statistics appreciably.

\section{Data}\label{data_section}

We use data from the Center for
Research in Security Prices (CRSP) on a monthly basis over the
1926:1-2019:12 period.\footnote{More specifically, results presented here
were calculated based on data from Historical Indexes, \copyright
2020 Center for Research in Security Prices (CRSP), The University of
Chicago Booth School of Business. Wharton Research Data Services was
used in preparing this article. This service and the data available
thereon constitute valuable intellectual property and trade secrets
of WRDS and/or its third-party suppliers.} Our base case tests use
the CRSP 10 year US treasury index for the bond asset
and the CRSP value-weighted total return index for the stock asset.
This latter index includes all distributions for all domestic stocks
trading on major U.S.\ exchanges.\footnote{The
10-year Treasury index was constructed from monthly returns from CRSP
back to 1941. The data for 1926-1941 were interpolated from annual
returns in \citet{Homer_rates}.} All of these various indexes are in
nominal terms, so we adjust them for inflation by using the U.S.\ CPI
index, also supplied by CRSP. We use real indexes since investors saving
for retirement should be focused on real (not nominal) wealth goals.

We use the threshold technique \citep{mancini2009,contmancini2011,Dang2015a}
to estimate the parameters for the parametric stochastic process models.
Note that the data is inflation adjusted, so that all parameters
reflect real returns.
Table \ref{fit_params} shows the results of calibrating
the models to the historical data.  
The correlation $\rho_{sb}$
is computed by removing any returns which occur at times corresponding
to jumps in either series, and then using the sample covariance.
Further discussion of the validity of assuming that the stock and
bond jumps are independent is given in \citet{forsyth_2020_a}.

{\small
\begin{table}[hbt!]
\begin{center}
\begin{tabular}{cccccccc} \toprule
 CRSP & $\mu^s$ & $\sigma^s$ & $\lambda^s$ & $p_{\text{\emph{up}}}^s$ &
  $\eta_1^s$ & $\eta_2^s$ & $\rho_{sb}$ \\ \midrule
       & 0.0877  & 0.1459&  0.3191  &  0.2333 & 4.3608 & 5.504 & 0.04554\\
 \midrule
 \midrule
10-year Treasury & $\mu^b$ & $\sigma^b$ & $\lambda^b$ & $p_{\text{\emph{up}}}^b$ &
  $\eta_1^b$ & $\eta_2^b$ & $\rho_{sb}$ \\ \midrule
        & 0.0239 & 0.0538 & 0.3830 &  0.6111 &  16.19 & 17.27 & 0.04554\\
\bottomrule
\end{tabular}
\caption{Estimated annualized parameters for double exponential jump
diffusion model.  Value-weighted CRSP index, 10-year US treasury index
deflated by the CPI.  Sample period 1926:1 to 2019:12. 
}
\label{fit_params}
\end{center}
\end{table}
}

\section{Investment scenario}
Table \ref{base_case_1} shows our base case investment scenario. 
We will use thousands as our units of wealth in the following.  For example,
a withdrawal of $40$ per year corresponds to $\$40,000$ per
year, with an initial wealth of $1000$ ($\$1,000,000$).    Thus, a withdrawal of 40 per year
would correspond to the use of the four per cent rule \citep{Bengen1994}.

To make this example more concrete, this scenario would apply to a
retiree who is $65$ years old, with a pre-retirement salary of \$100,000 per year,
with a total value of DC plan holdings at retirement of \$1,000,000.
In Canada, a retiree would be eligible for government benefits (indexed)
of about \$20,000 per year.  If the investor targets withdrawing \$40,000 per year
from the DC plan, then this would result in  total real income of about \$60,000 per year,
which is about 60\% of pre-retirement salary.
For risk management purposes, we will assume that the retiree owns mortgage free real estate worth
about \$400,000, which will retain its value in real terms over 30 years. If our measure
of risk is Expected Shortfall  at the 5\% level, then we suppose that any ES  which is greater
than about -\$200,000 (the negative of one half the value of the real estate)
can be managed using a reverse mortgage. 

Note that in Table \ref{base_case_1} we have set the borrowing spread $\mu_c^{b} =0$.
The (real) drift rate of the 10-year treasury index is about 200 bps larger than the
30-day T-bill index.  Hence, borrowing at the 10-year treasury rate is roughly comparable
to borrowing at the short term rate plus a spread of about 200 bps, which we suppose to
be a reasonable  estimate for a well secured reverse mortgage.

\begin{table}[hbt!]
\begin{center}
\begin{tabular}{lc} \toprule
Investment horizon $T$ (years) & 30  \\
Equity market index & CRSP Cap-weighted index (real) \\
Bond index & 10-year Treasury (US) (real) \\
Initial portfolio value $W_0$  & 1000 \\
Cash withdrawal times & $t=0,1,\ldots, 30$\\
Withdrawal range   & $[q_{\min}, q_{\max}]$\\
Equity fraction range & $[0,1]$\\
Borrowing spread $\mu_c^{b}$ & 0.0 \\
Rebalancing interval (years) & 1  \\
Market parameters & See Table~\ref{fit_params} \\ \bottomrule
\end{tabular}
\caption{Input data for examples.  Monetary units: thousands of dollars.
\label{base_case_1}}
\end{center}
\end{table}

\subsection{Synthetic market}
We fit the parameters for the parametric stock and bond processes
(\ref{jump_process_stock} - \ref{jump_process_bond}) as described in 
Section ~\ref{data_section}.  We then compute
and store the optimal controls based on the parametric market model.
Finally, we compute various statistical quantities by using the stored control,
and then carrying out Monte Carlo simulations, based on processes (\ref{jump_process_stock} - \ref{jump_process_bond}).

\subsection{Historical market}\label{boot_section}
We compute and store the optimal controls based on the parametric
model (\ref{jump_process_stock}-\ref{jump_process_bond}) as for the
synthetic market case.  However, we compute statistical quantities
using the stored controls, but using bootstrapped historical return
data directly.  We remind the reader that all returns are
inflation adjusted.  We use the stationary block bootstrap method \citep{politis1994,politis2004,politis2009,dichtl2016}.
A crucial parameter is the expected blocksize. Sampling the data in blocks
accounts for serial correlation in the data series.    We use the algorithm in \citet{politis2009} to determine
the optimal blocksize for the bond and stock returns separately, see Table \ref{auto_blocksize}. 
We use a paired sampling approach to
simultaneously draw returns from both time series.  In this case,  a reasonable
estimate for the blocksize for the paired resampling algorithm would
be about $.25$ years.
We will give results for a range of blocksizes as a check on the robustness of the bootstrap
results.
Detailed pseudo-code for block bootstrap resampling is given in \citet{Forsyth_Vetzal_2019a}.

\begin{table}[tb]
\begin{center}
{\small
\begin{tabular}{lc} \toprule
Data series          & Optimal expected \\
                     & block size $\hat{b}$ (months) \\ \midrule
  Real 10-year Treasury index    &  4.2 \\
  Real CRSP value-weighted index &  3.1 \\
\bottomrule
\end{tabular}
}
\end{center}
\caption{Optimal expected blocksize $\hat{b}=1/v$ when the blocksize follows
a geometric distribution $Pr(b = k) = (1-v)^{k-1} v$. The algorithm in
\citet{politis2009} is used to determine $\hat{b}$.
Historical data range 1926:1-2019:12.
\label{auto_blocksize}
}
\end{table}

\section{Synthetic and historical markets: constant withdrawals $q=40$, constant proportion strategy}
\label{constq_constp}
We consider the scenario in Table \ref{base_case_1}.
As a benchmark, we consider withdrawing at a constant rate of $40$ per year
(units: thousands of dollars).  This would correspond to the 4\% rule suggested
in \citep{Bengen1994}.  We also assume that the portfolio is rebalanced
to a constant weight in stocks each year.  Table \ref{const_wt_const_withdraw_40_day}
shows the results for various equity weights in the synthetic market,
while Table \ref{const_wt_const_withdraw_boot} shows results for the
bootstrapped historical market.

Note that the results are roughly comparable for both synthetic and
historical markets.  However, none of the cases with constant withdrawals
and constant equity weights meets our criteria of an ES $> - \$200,000$.

\begin{table}[hbt!]
\begin{center}
\begin{tabular}{ccc} \toprule
Equity Weight & Expected Shortfall (5\%)  & $Median[W_T]$     \\ \midrule
   0.0        & -469.4                   & 127.4  \\
   0.2        & -288.6                   & 579.3  \\
   0.4        & -295.5                   & 1137 \\
   0.6        & -436.0                    & 1762 \\
   0.8        & -630.6                    & 2374 \\
\bottomrule
\end{tabular}
\caption{Synthetic market results assuming
the scenario given in Table~\ref{base_case_1}, with $q_{\max} = q_{\min} = 40$,
and $p_{\ell} = constant$ in equation (\ref{PCEE_b}). Stock index: real capitalization weighted CRSP stocks;
bond index: real 10-year US treasuries.  Parameters from Table \ref{fit_params}.
Units: thousands of dollars. Statistics
based on $2.56 \times 10^6$ Monte Carlo simulation runs.
\label{const_wt_const_withdraw_40_day}
}
\end{center}
\end{table}

\begin{table}[hbt!]
\begin{center}
\begin{tabular}{ccc} \toprule
Equity Weight & Expected Shortfall (5\%)  & $Median[W_T]$     \\ \midrule
   0.0        &-444.1                   & 132.6 \\
   0.2        & -278.6                   & 555.2\\
   0.4        & -267.5                &  1083 \\
   0.6        & -374.5                &  1694  \\
   0.8        & -537.7                & 2322 \\
\bottomrule
\end{tabular}
\caption{Historical market results (bootstrap resampling) assuming
the scenario given in Table~\ref{base_case_1}, except that $q_{\max} = q_{\min} = 40$,
and $p_{\ell} = constant$ in equation (\ref{PCEE_b}). Stock index: real capitalization weighted CRSP stocks;
bond index: real 10-year US treasuries.  Historical data in range 1926:1-2019:12.
Parameters from Table \ref{fit_params}.
Units: thousands of dollars. Statistics
based on $10^5$ bootstrap resampling simulations.  Expected blocksize $0.25$ years.
\label{const_wt_const_withdraw_boot}
}
\end{center}
\end{table}

\section{Synthetic market}

\subsubsection{Convergence test: synthetic market}
We carry out an initial test of convergence of our numerical method
for the EW-ES problem (\ref{PCEE_a}).  Table \ref{conservative_accuracy} shows the results
for solution of the PDE on a sequence of grids.  For each refinement level, we store the optimal
control, and use this control in Monte Carlo simulations.   The PDE solution appears to
converge at roughly a first order rate.  However, the Monte Carlo simulations (based on
the PDE controls) appear to be slightly more accurate.  This effect has also been noted
in \citet{Ma2015}.  In the following, we will report results based on (i) determining the
control from the PDE solution (using the finest grid in Table \ref{conservative_accuracy})
and (ii) using this control in Monte Carlo simulations.

\begin{table}[hbt!]
\begin{center}
\begin{tabular}{lcc|cc} \toprule
\multicolumn{3}{c|}{Algorithm in Section \ref{algo_section} } & \multicolumn{2}{c}{Monte Carlo} \\ \midrule
Grid &   ES (5\%) & $E[ \sum_i q_i]/(M+1)$  &   ES (5\%) & $E[ \sum_i q_i]/(M+1)$  \\
\midrule 
%  redo with eps = +
$512 \times 512$ & -16.788     & 49.7470    & -5.035      &  50.35\\
$1024 \times 1024$ & -9.3609    &  49.8513   &  -4.511    & 49.86 \\
$2048 \times 2048$ & -7.6954   &   49.8998   &  -4.732  &  49.89 \\
\midrule
\bottomrule
\end{tabular}
\caption{Convergence test,
real stock index: deflated real capitalization weighted CRSP, real bond index: deflated
ten year Treasuries.  Scenario in Table \ref{base_case_1}.
Parameters in Table \ref{fit_params}.
The Monte Carlo method used
$2.56 \times 10^6$ simulations.
$\kappa = 1.0, \alpha = .05$.
Grid refers to the grid
used in the Algorithm in Section \ref{algo_section}: $n_x \times n_b$, where $n_x$ is
the number of nodes in the $\log s$ direction, and
$n_b$ is the number of nodes in the $\log b$ direction.
Units: thousands of dollars (real).
$(M+1)$ is the total number of withdrawals. $M$ is the number of rebalancing dates.
$q_{\min} = 35.0$. $q_{\max} = 60$.
$W^* = 204.6$ (equation(\ref{PCEE_a})) on the finest grid, Algorithm in Section \ref{algo_section}.
\label{conservative_accuracy}}
\end{center}
\end{table}

\subsection{Constant withdrawals}
As a benchmark strategy, we solve problem (\ref{PCEE_a}), scenario in Table \ref{base_case_1},
but force a constant withdrawal, i.e. we set $q_{\min} = q_{\max}$, but retain the
optimal asset allocation control.  
The results are shown in Table \ref{optimal_p_q_const}.
We can see from Table \ref{optimal_p_q_const}  that constant withdrawals of $35$ and $40$ per year meet
our objective that $ES > -200$ (recall that units are thousands of dollars).
The strategy of withdrawing $40$ per year, coupled with optimal asset allocation, is a reasonable
strategy, which meets both our income and risk targets.  However, note that
$Median[W_T] = 717$, indicating that 50\% of the time, we leave over
\$700,000 on the table at the end our investment horizon.  In other words,
the constant withdrawal rate of \$40,000 per year, while being reasonably 
safe over 30 years, paradoxically also has a high probability of underspending.
This leads us to then allow the additional flexibility of variable spending.

\begin{table}[hbt!]
\begin{center}
\begin{tabular}{cccc} \toprule
$q_{\max} = q_{\min}$&  ES (5\%) &    $Median[W_T]$ & $\sum_i Median(p_i) /M$     \\ \midrule
                  35 & 31.03     & 952.2   & .271 \\
                  40 &  -196.1 & 716.6   & .357 \\
                  45 &  -425.4 & 441.4   & .424 \\
\bottomrule
\end{tabular}
\caption{Synthetic market results for optimal strategies,  assuming
the scenario given in Table~\ref{base_case_1}. Stock index: real capitalization weighted CRSP stocks;
bond index: real 10 year US treasuries.  Parameters from Table \ref{fit_params}.
Units: thousands of dollars. Statistics
based on $2.56 \times 10^6$ Monte Carlo simulation runs.
Control is computed using the Algorithm in Section \ref{algo_section}, stored, and then used
in the Monte Carlo simulations.
$(M+1)$ is the number of withdrawals.
$M$ is the number of rebalancing dates.
$\epsilon = 10^{-6}$.
\label{optimal_p_q_const}
}
\end{center}
\end{table}

\subsection{Synthetic market: efficient frontiers}
We solve problem (\ref{PCEE_a}), scenario in Table \ref{base_case_1},
and now  allow the withdrawal to be determined from our optimal strategy.
We compute the efficient EW-ES frontiers for two cases: $[q_{\min},q_{\max}] = [35,60]$
and $[q_{\min},q_{\max}] = [40,65]$, as shown  in Figure \ref{EW_ES_frontiers}.  We also show
the single points corresponding to constant withdrawals (from Table \ref{optimal_p_q_const}
for $q=35,40$) on the Figures as well.  Detailed tables showing statistics for each point
on the efficient frontier are given in Tables \ref{optimal_p_q} and \ref{optimal_p_q_2}.

For sufficiently large $\kappa$ we expect that the the efficient frontier should
converge to the constant withdrawal with $q = q_{\min}$.  However, numerically,
we were not able to obtain accurate solutions for very large values of $\kappa$,
hence the efficient frontiers are shown as ending above the constant withdrawal
points in Figure \ref{EW_ES_frontiers}. The dotted lines represent the extrapolated
values of the efficient frontiers.  Note that these dotted lines are almost vertical,
indicating that very small decreases in ES cause very large changes in EW.  This is, of
course, why it is hard to track the curve (numerically) along these points.

Both of these efficient frontiers are qualitatively similar, so we focus on 
Figure \ref{frontier_40_65}.  Compare the variable withdrawal strategy to the
fixed withdrawal strategy.  The fixed withdrawal strategy $q=40$, from Table \ref{optimal_p_q_const},
has $ES = -196.1$.
If we pick the point on the EW-ES curve with expected average withdrawals of
$53.4$, this corresponds to an $ES = -199.8$ (from Table \ref{optimal_p_q_2}).  In other words, by accepting
a very small amount of extra risk (a smaller ES), we have a strategy
which never withdraws less than $40$ per year, but on average withdraws
$53.4$ per year.  At first sight, this seems to be a very counterintuitive result.
However, from Table \ref{optimal_p_q_const}, we can see that $Median[W_T] = 717$ for
constant $q=40$, while from Table \ref{optimal_p_q_2}, the point $(EW,ES) = (53.4, -199.8)$
has $Median[W_T] = 78.3$.  This means that the optimal variable withdrawal strategy
is simply much more efficient in withdrawing cash over the 30 year  horizon,
in the event the investments do well.

\begin{figure}[tb]
\centerline{%
\begin{subfigure}[t]{.40\linewidth}
\centering
\includegraphics[width=\linewidth]{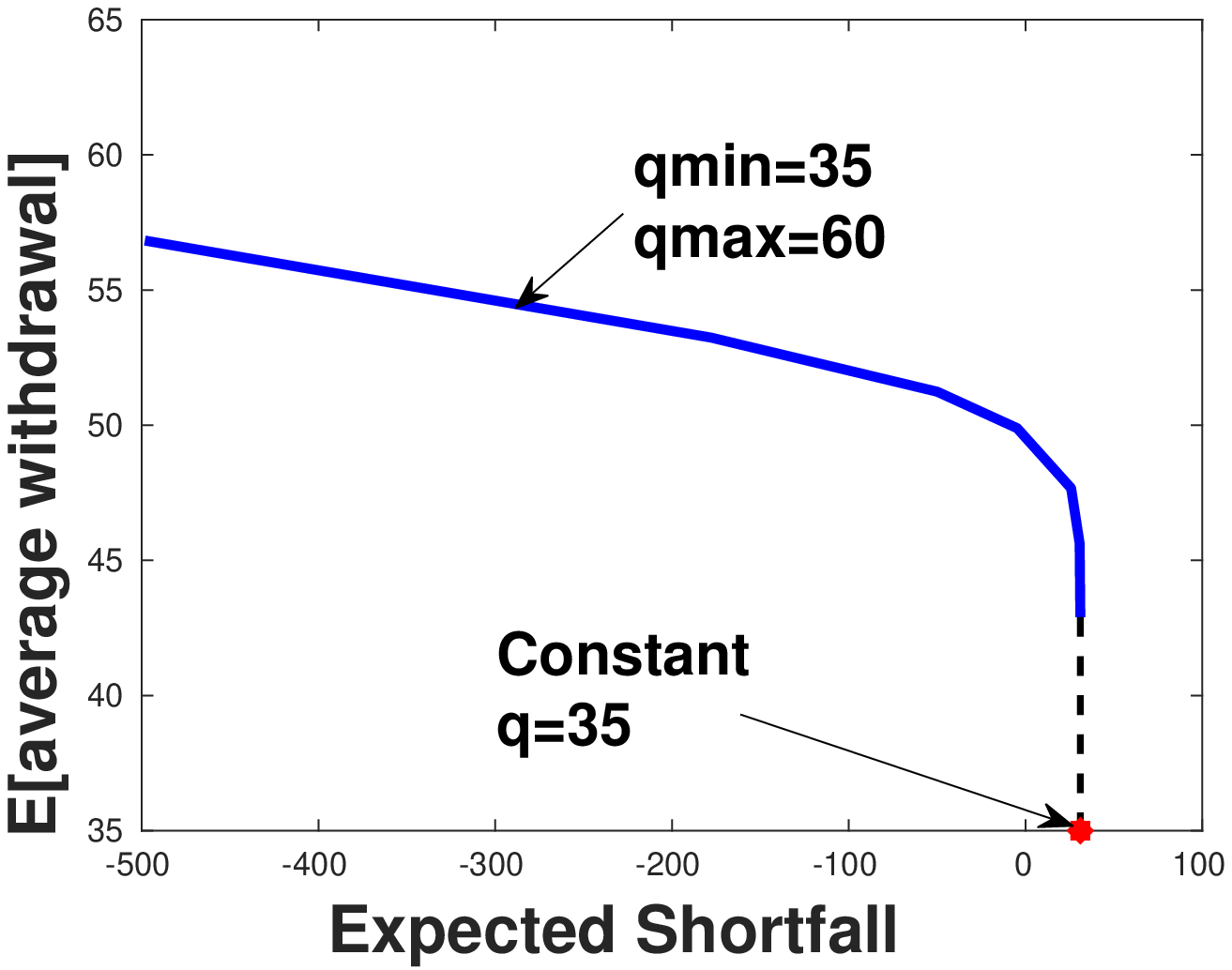}
\caption{$q_{\min}=35, q_{\max} = 60$.}
\label{frontier_35_60}
\end{subfigure}
\begin{subfigure}[t]{.40\linewidth}
\centering
\includegraphics[width=\linewidth]{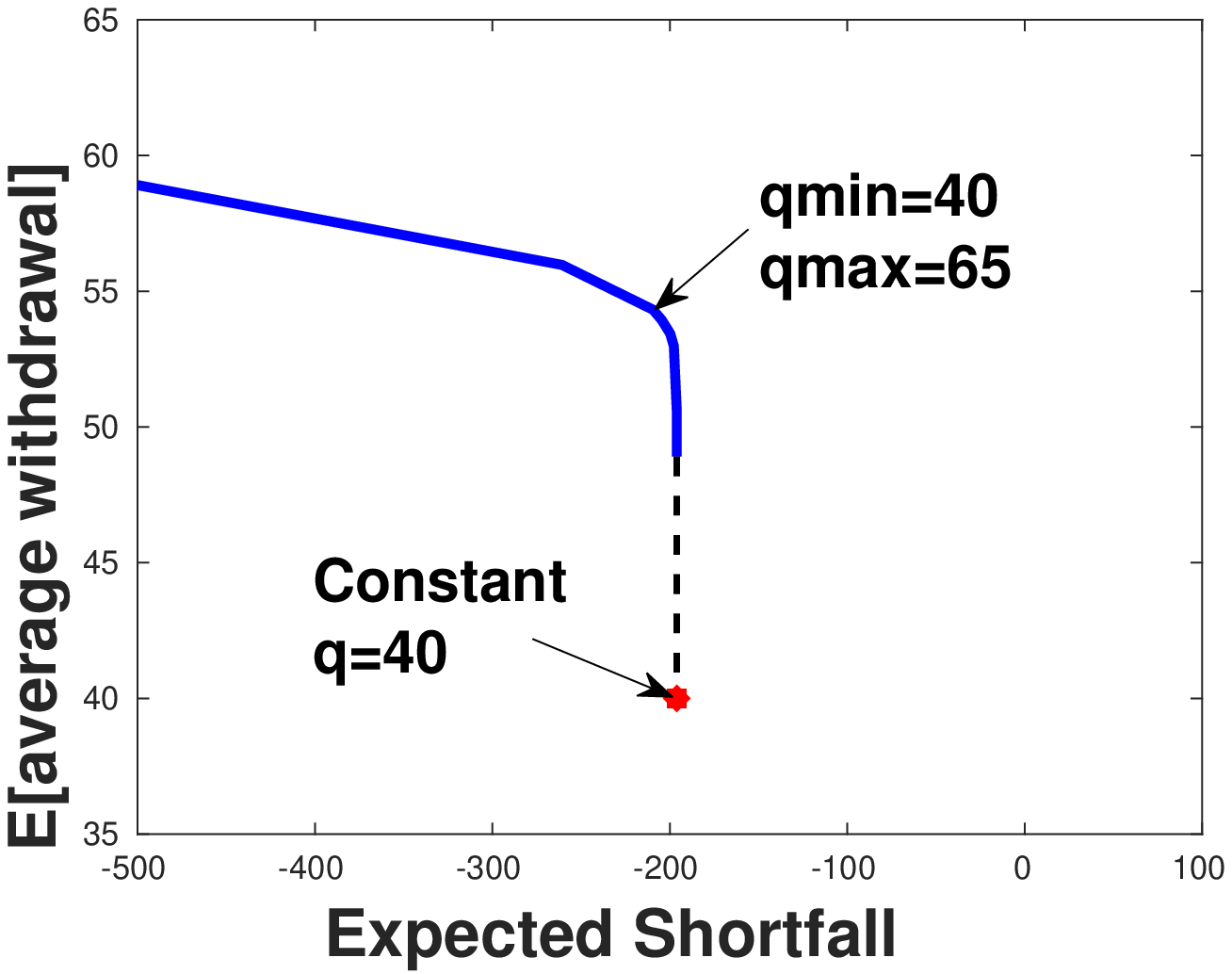}
\caption{$q_{\min}=40, q_{\max} = 65$.}
\label{frontier_40_65}
\end{subfigure}
}
\caption{EW-ES frontiers. Scenario in Table \ref{base_case_1}.
Optimal control computed from problem (\ref{PCEE_a}),
Parameters based on real CRSP index,
real 10-year US treasuries  (see Table \ref{fit_params}).  Control computed and stored from
the PDE solution (synthetic market).
Frontier computed using 
$2.56 \times 10^{6}$ MC simulations.
Units: thousands of dollars.
$\epsilon = 10^{-6}$.
}
\label{EW_ES_frontiers}
\end{figure}

\subsection{Synthetic market: optimal controls, withdrawals, wealth and heat map}
The percentiles of fraction in equities, wealth and withdrawals, for the
point on the efficient frontier $(EW,ES) = (51.3, -50.9)$ are shown
in Figure \ref{percentiles_35_60}, for the case $[q_{\min}, q_{\max}] = [35,60]$.
The heat maps of the controls for fraction in equities and optimal withdrawals
are given in Figure \ref{heat_map_fig}.  The normalized withdrawal
is $(q - q_{\min})/(q_{\max} - q_{\min})$.

Note the interesting feature of the median withdrawal in Figure \ref{percentiles_q_35_60}.
The median withdrawal stays at $q=35$ for the first five years of retirement, then
increases rapidly  to $q=60$ by year seven.  This is a result of the fact that
the optimal withdrawal is very close to a {\em bang-bang} type control,
as seen in the heat map shown in Figure \ref{heat_withdrawal_35_60}.  This is not unexpected,
due the fact that in the continuous withdrawal/rebalancing limit, the withdrawal control
(for a rate of withdrawals) is in fact bang-bang, as noted in Proposition \ref{bang_bang_prop}.

The corresponding percentiles and heat maps for the case where $[q_{\min}, q_{\max}] = [40,65]$
are given in Figures \ref{percentiles_40_65} and \ref{heat_map_40_65},
for the point on the EW-ES curve $(EW,ES) = (54.3, -209.5)$.  These figures
are qualitatively similar to the $[q_{\min}, q_{\max}] = [35,60]$ case.

\begin{figure}[tb]
\centerline{%
\begin{subfigure}[t]{.33\linewidth}
\centering
\includegraphics[width=\linewidth]{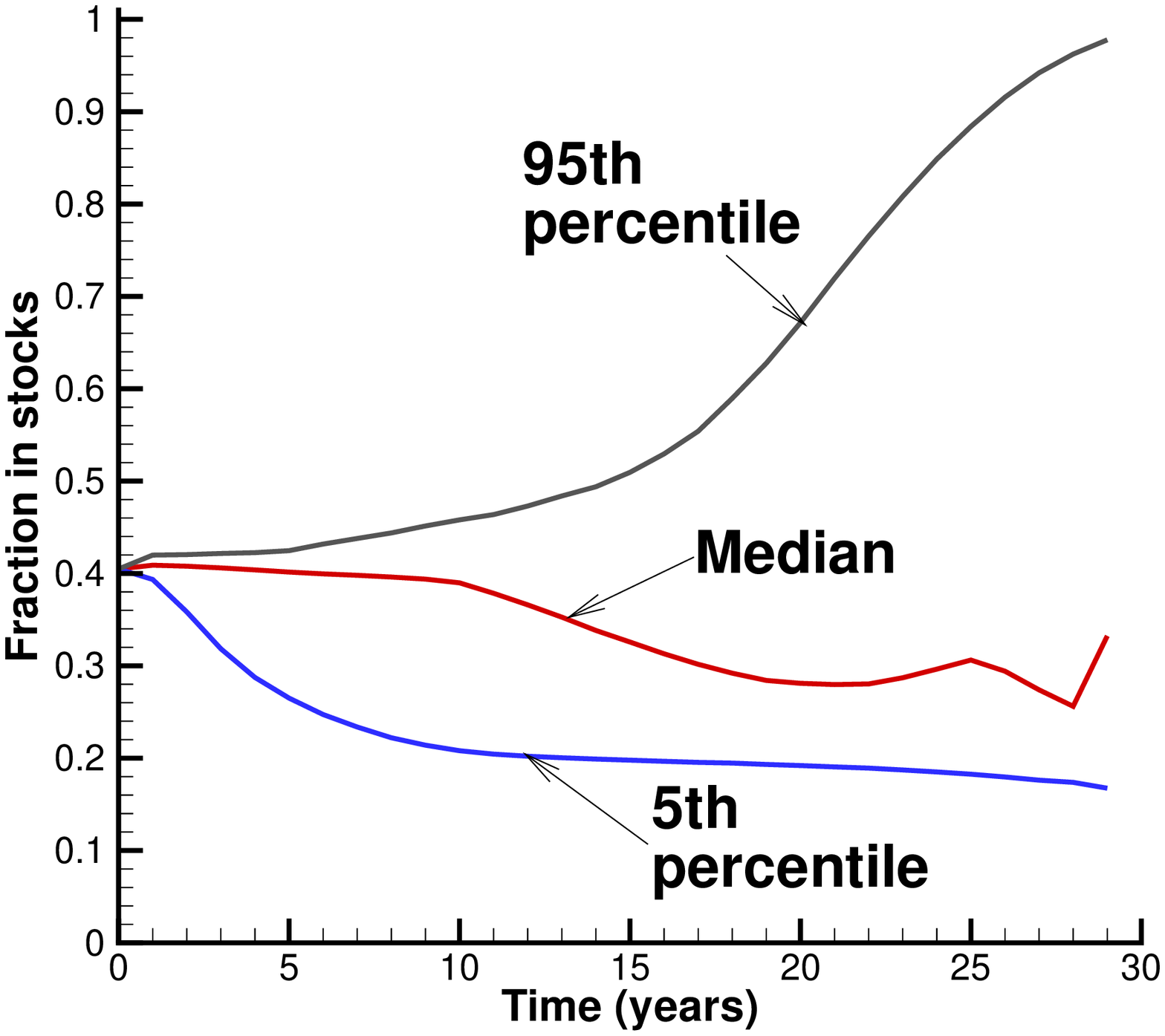}
\caption{Percentiles fraction in stocks}
\label{percentile_stocks_35_60}
\end{subfigure}
\begin{subfigure}[t]{.33\linewidth}
\centering
\includegraphics[width=\linewidth]{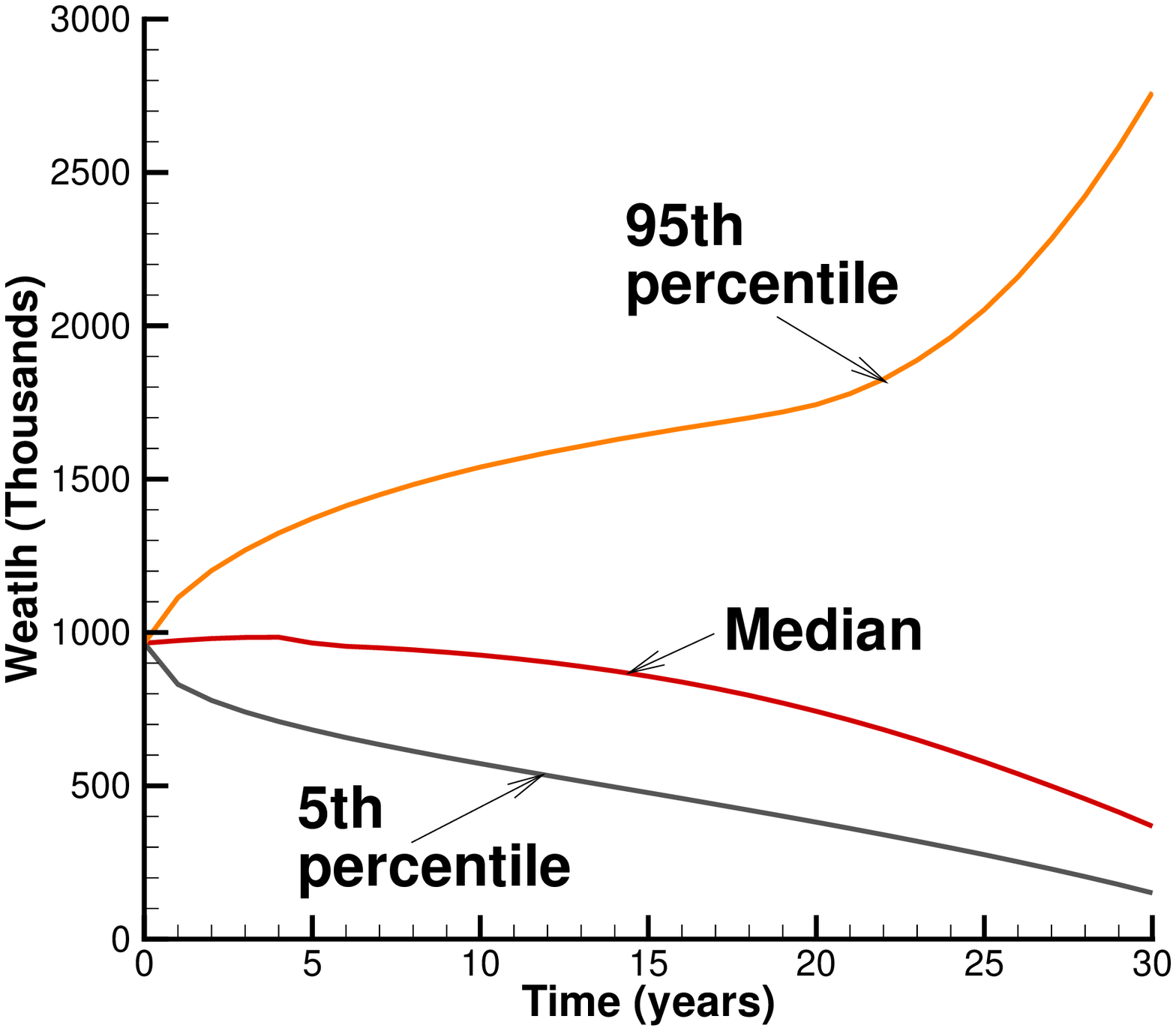}
\caption{Percentiles  wealth}
\label{percentiles_wealth_35_60}
\end{subfigure}
\begin{subfigure}[t]{.33\linewidth}
\centering
\includegraphics[width=\linewidth]{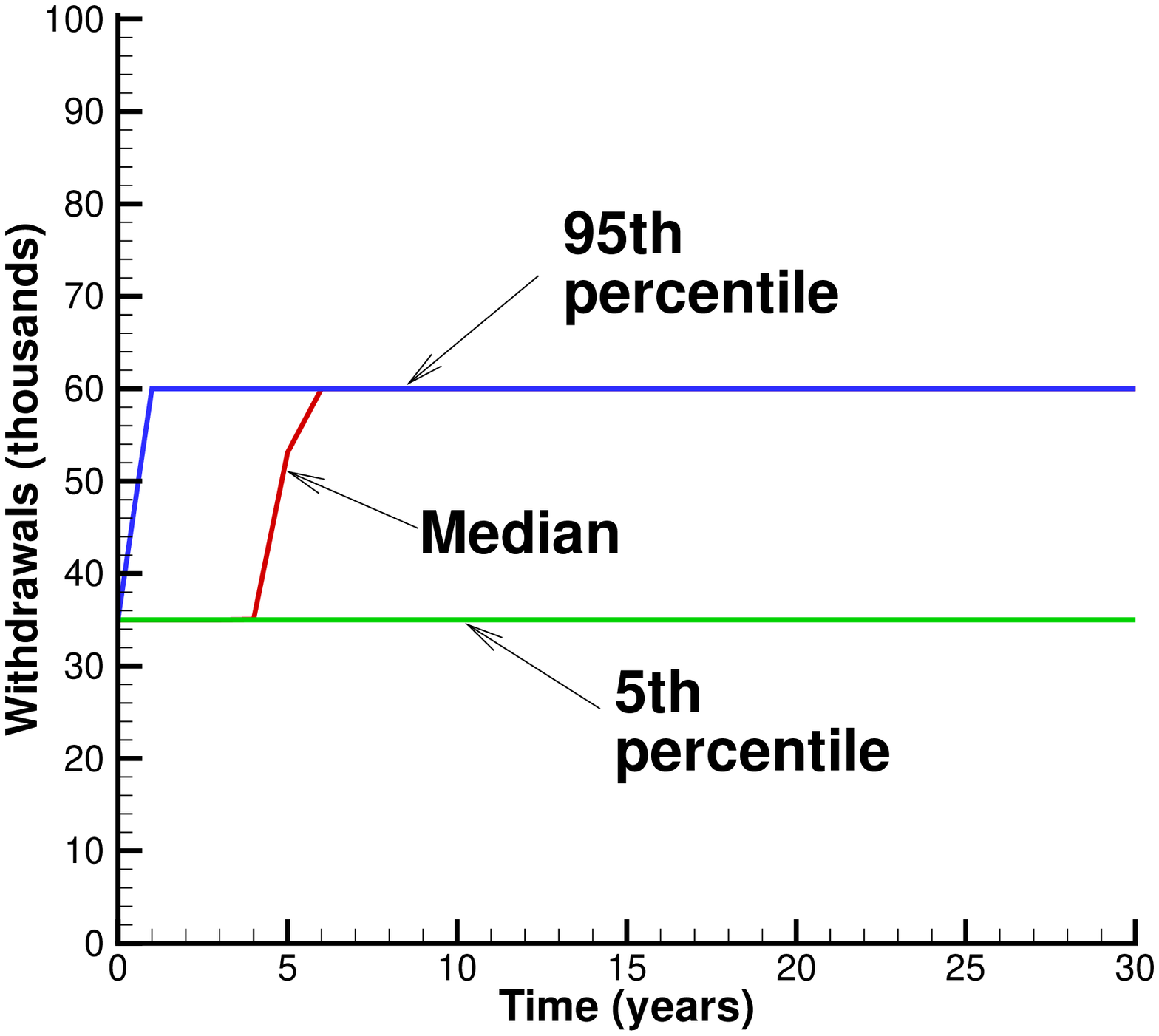}
\caption{Percentiles withdrawals}
\label{percentiles_q_35_60}
\end{subfigure}
}
\caption{Scenario in Table \ref{base_case_1}.
Optimal control computed from problem (\ref{PCEE_a}).
Parameters based on the real CRSP index,
and real 10-year treasuries (see Table \ref{fit_params}).  Control computed and stored
from the PDE solution.
Synthetic market, $2.56 \times 10^{6}$ MC simulations.
$q_{min} = 35, q_{\max} = 60$, $\kappa = 0.5$. $W^* = 177.9$.
$\epsilon = 10^{-6}$.
Units: thousands of dollars.
}
\label{percentiles_35_60}
\end{figure}

\begin{figure}[htb!]
\centerline{
\begin{subfigure}[t]{.4\linewidth}
\centering
\includegraphics[width=\linewidth]{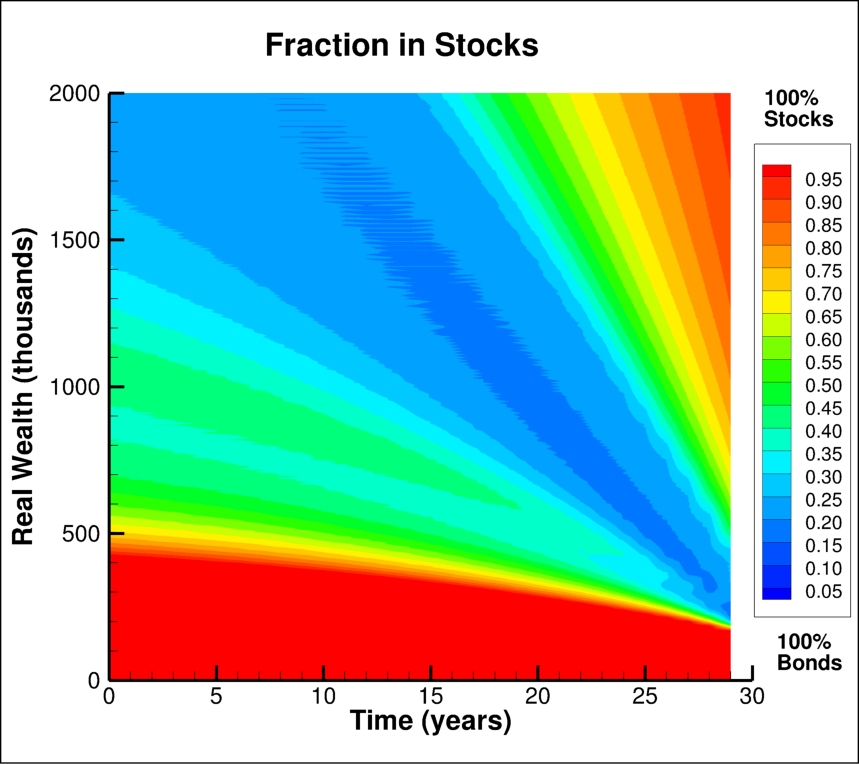}
\caption{Fraction in stocks}
\label{heat_stocks_35_60}
\end{subfigure}
\hspace{.05\linewidth}
\begin{subfigure}[t]{.4\linewidth}
\centering
\includegraphics[width=\linewidth]{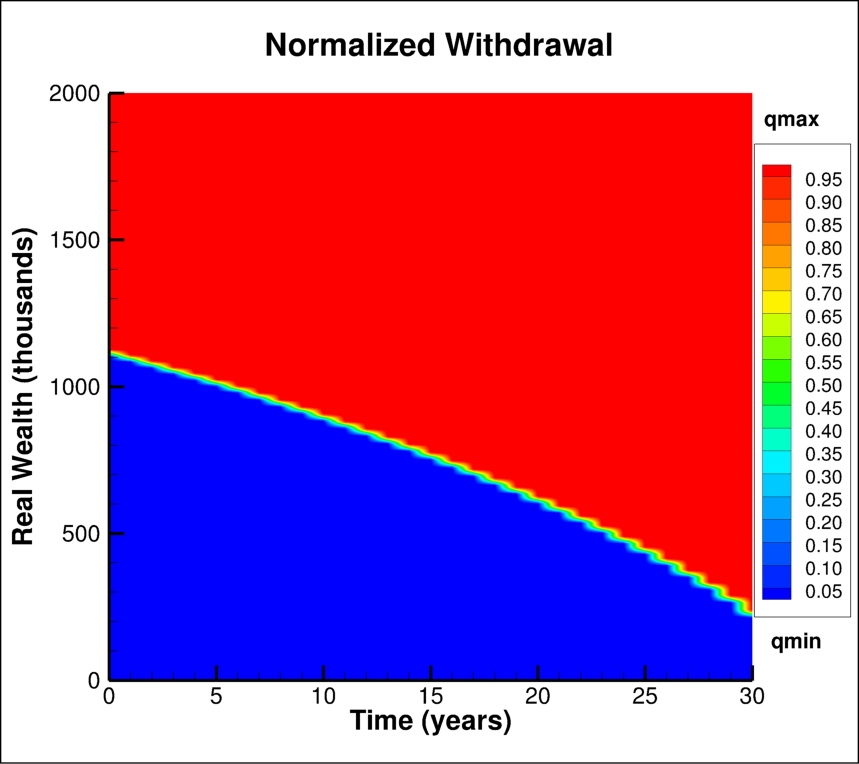}
\caption{Withdrawals}
\label{heat_withdrawal_35_60}
\end{subfigure}
}
\caption{
Heat map of controls: fraction in stocks and
withdrawals, computed from  problem (\ref{PCEE_a}).
cap-weighted real CRSP, real 10 year treasuries.
Scenario given in Table~\ref{base_case_1}.
Control computed and stored from the PDE solution.
$q_{min} = 35, q_{\max} = 60$, $\kappa = 0.5$. $W^* = 177.9$.
$\epsilon = 10^{-6}$.
Normalized withdrawal $(q - q_{\min})/(q_{\max} - q_{\min})$.
Units: thousands of dollars.
\label{heat_map_fig}}
\end{figure}

\begin{figure}[tb]
\centerline{%
\begin{subfigure}[t]{.33\linewidth}
\centering
\includegraphics[width=\linewidth]{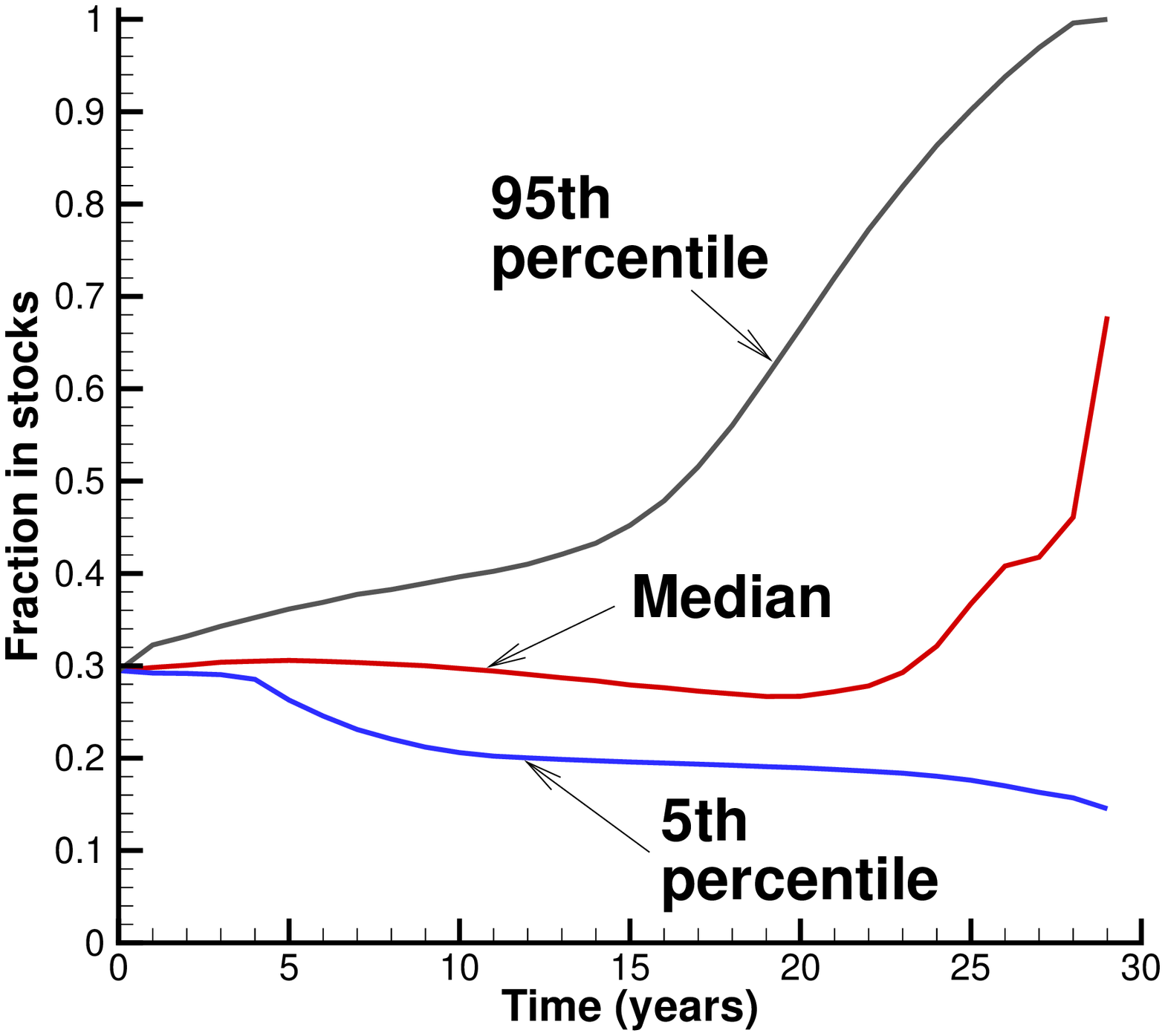}
\caption{Percentiles fraction in stocks}
\label{percentile_stocks_40_65}
\end{subfigure}
\begin{subfigure}[t]{.33\linewidth}
\centering
\includegraphics[width=\linewidth]{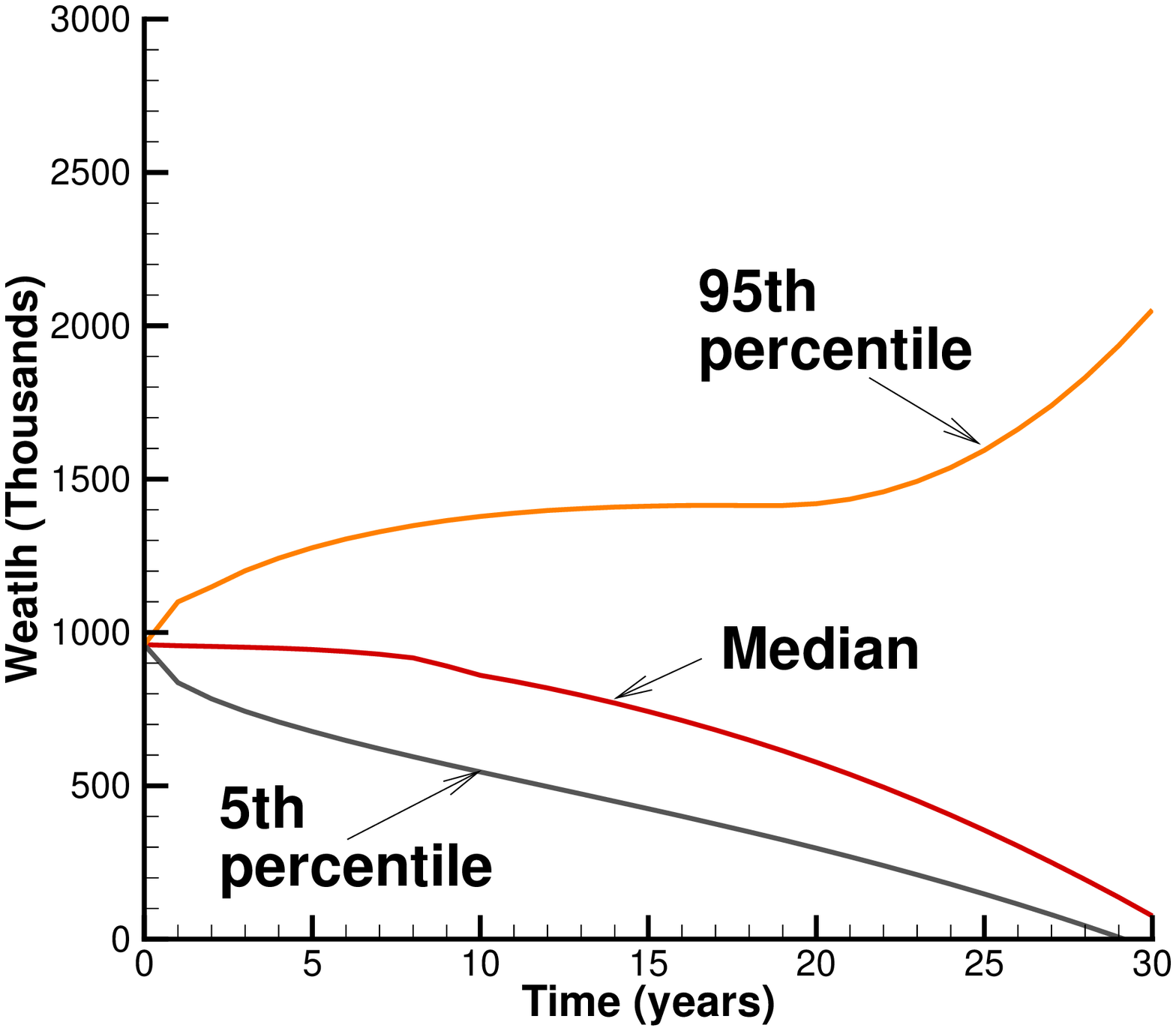}
\caption{Percentiles  wealth}
\label{percentiles_wealth_40_65}
\end{subfigure}
\begin{subfigure}[t]{.33\linewidth}
\centering
\includegraphics[width=\linewidth]{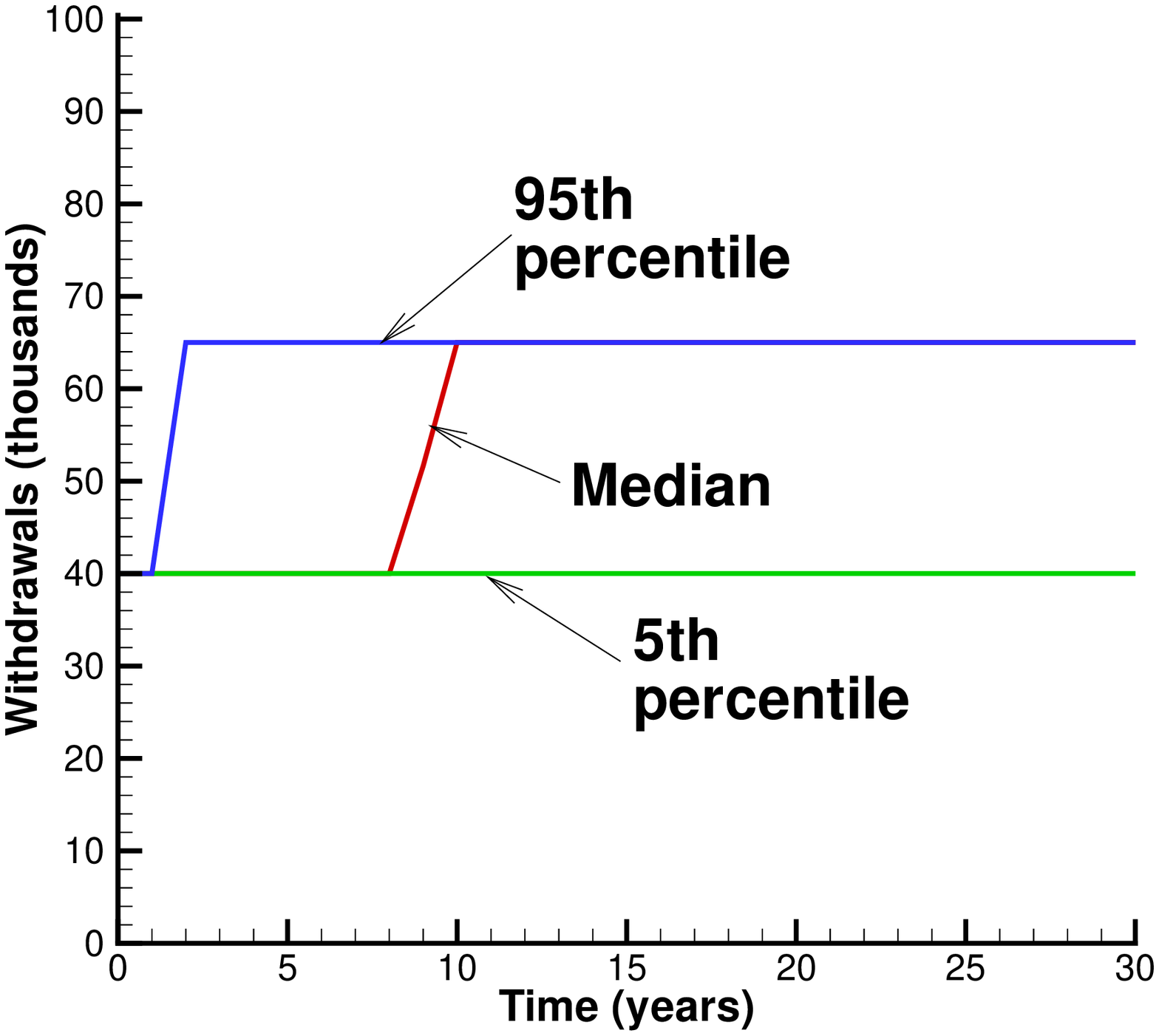}
\caption{Percentiles withdrawals}
\label{percentiles_q_40_65}
\end{subfigure}
}
\caption{Scenario in Table \ref{base_case_1}.
Optimal control computed from problem (\ref{PCEE_a}).
Parameters based on the  real CRSP index,
and real 10-year treasuries (see Table \ref{fit_params}).  Control computed and stored using
the PDE.
Scenario given in Table~\ref{base_case_1}.
Synthetic market, $2.56 \times 10^{6}$ MC simulations.
$q_{min} = 40, q_{\max} = 65$, $\kappa = 1.75$. $W^* = -28.2$.
$\epsilon = 10^{-6}$.
Units: thousands of dollars.
}
\label{percentiles_40_65}
\end{figure}

\begin{figure}[htb!]
\centerline{
\begin{subfigure}[t]{.4\linewidth}
\centering
\includegraphics[width=\linewidth]{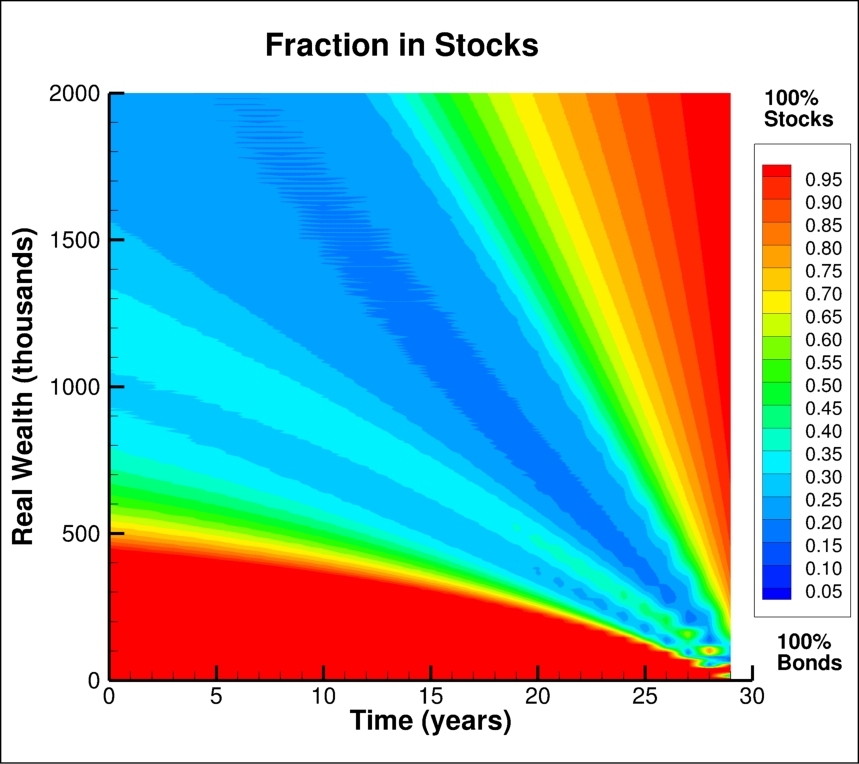}
\caption{Fraction in stocks}
\label{heat_tocks_40_65}
\end{subfigure}
\hspace{.05\linewidth}
\begin{subfigure}[t]{.4\linewidth}
\centering
\includegraphics[width=\linewidth]{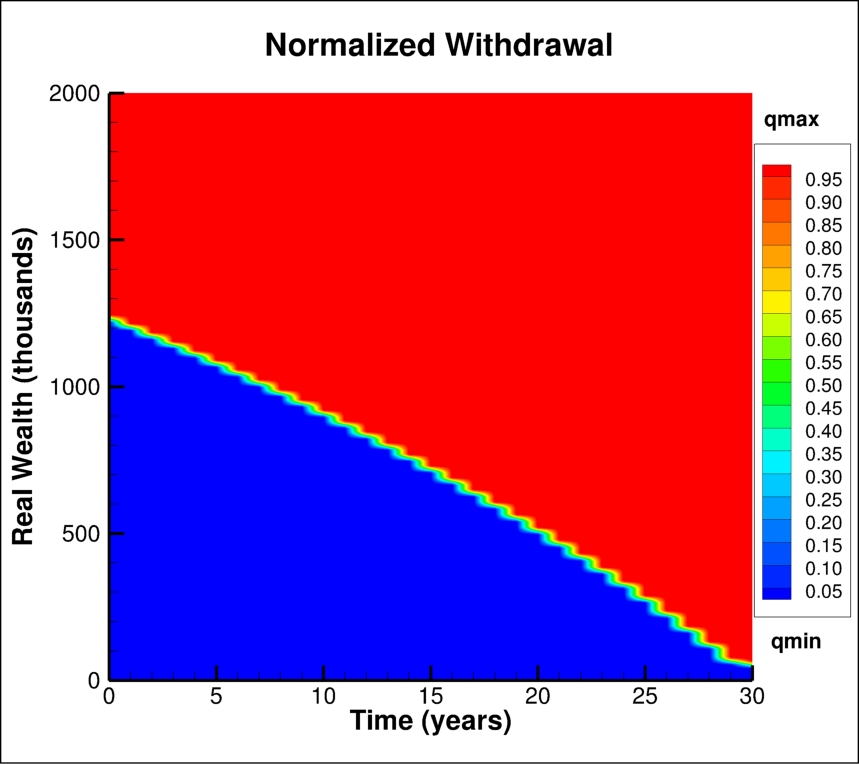}
\caption{Withdrawals}
\label{heat_withdrawal_40_65}
\end{subfigure}
}
\caption{
Heat map of controls: fraction in stocks and
withdrawals, computed from  problem (\ref{PCEE_a}).
Parameters based on the  real CRSP index,
and real 10-year treasuries (see Table \ref{fit_params}). 
Scenario given in Table~\ref{base_case_1}.
$q_{min} = 40, q_{\max} = 65$, $\kappa = 1.75$. $W^* = -28.2$.
$\epsilon = 10^{-6}$.
Normalized withdrawal $(q - q_{\min})/(q_{\max} - q_{\min})$.
Units: thousands of dollars.
\label{heat_map_40_65}}
\end{figure}

\clearpage

\section{Robustness check: historical market}
We compute and store the optimal controls from Problem (\ref{PCEE_a}), and then
use these controls in the bootstrapped historical market, as described in
Section \ref{boot_section}.
Table \ref{bootstrap_35_60} shows the effect of using different blocksizes
in the bootstrap simulations, compared to the synthetic market results.
The expected withdrawals are all very close, for all blocksizes. 
There is more variability in the ES results, but this  spread
is acceptable for practical purposes.  This indicates that the choice of blocksize will
not influence the qualitative results appreciably.  In the following,
we will report results using a blocksize of $.25$ years, which
is justified from Table \ref{auto_blocksize}.

The  detailed bootstrapped efficient frontiers (using the
controls computed in the synthetic market) are given in Tables 
\ref{boot_optimal_p_q_35_60} and \ref{boot_optimal_p_q_40_65}.
In Figure \ref{EW_ES_frontiers_boot}, we compare the EW-ES frontiers
computed for the case (i) controls computed in the synthetic market,
frontier computed in the synthetic market and (ii) controls
computed in the synthetic market, control tested in the
historical market. 
We can see that the synthetic market
frontiers are very close to the historical market frontiers.
This indicates that the controls computed in the synthetic
market are robust to uncertainty in the synthetic stochastic process
model calibrated to historical data.

\begin{figure}[htb!]
\centerline{%
\begin{subfigure}[t]{.40\linewidth}
\centering
\includegraphics[width=\linewidth]{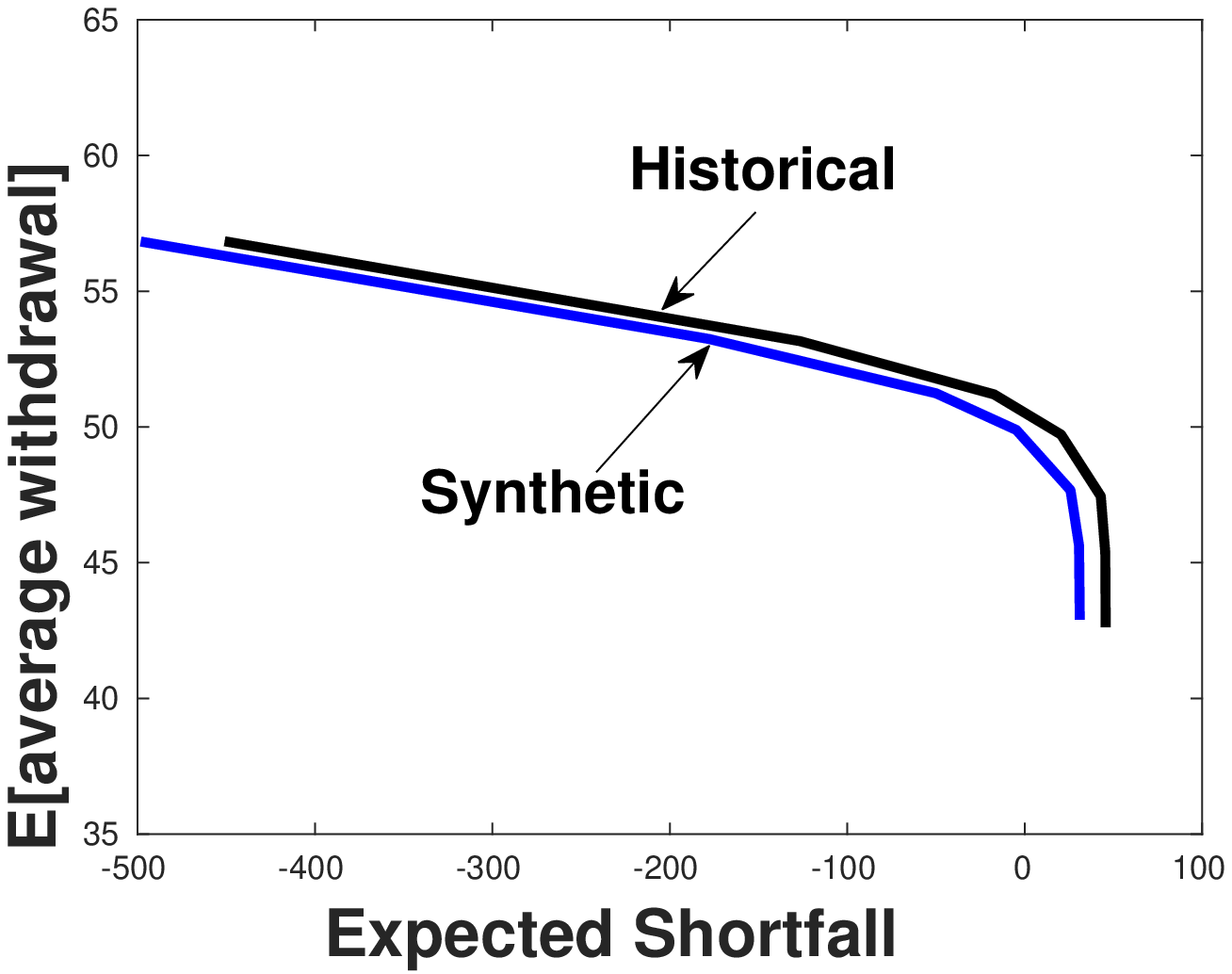}
\caption{$q_{\min}=35, q_{\max} = 60$.}
\label{frontier_35_60_boot}
\end{subfigure}
\begin{subfigure}[t]{.40\linewidth}
\centering
\includegraphics[width=\linewidth]{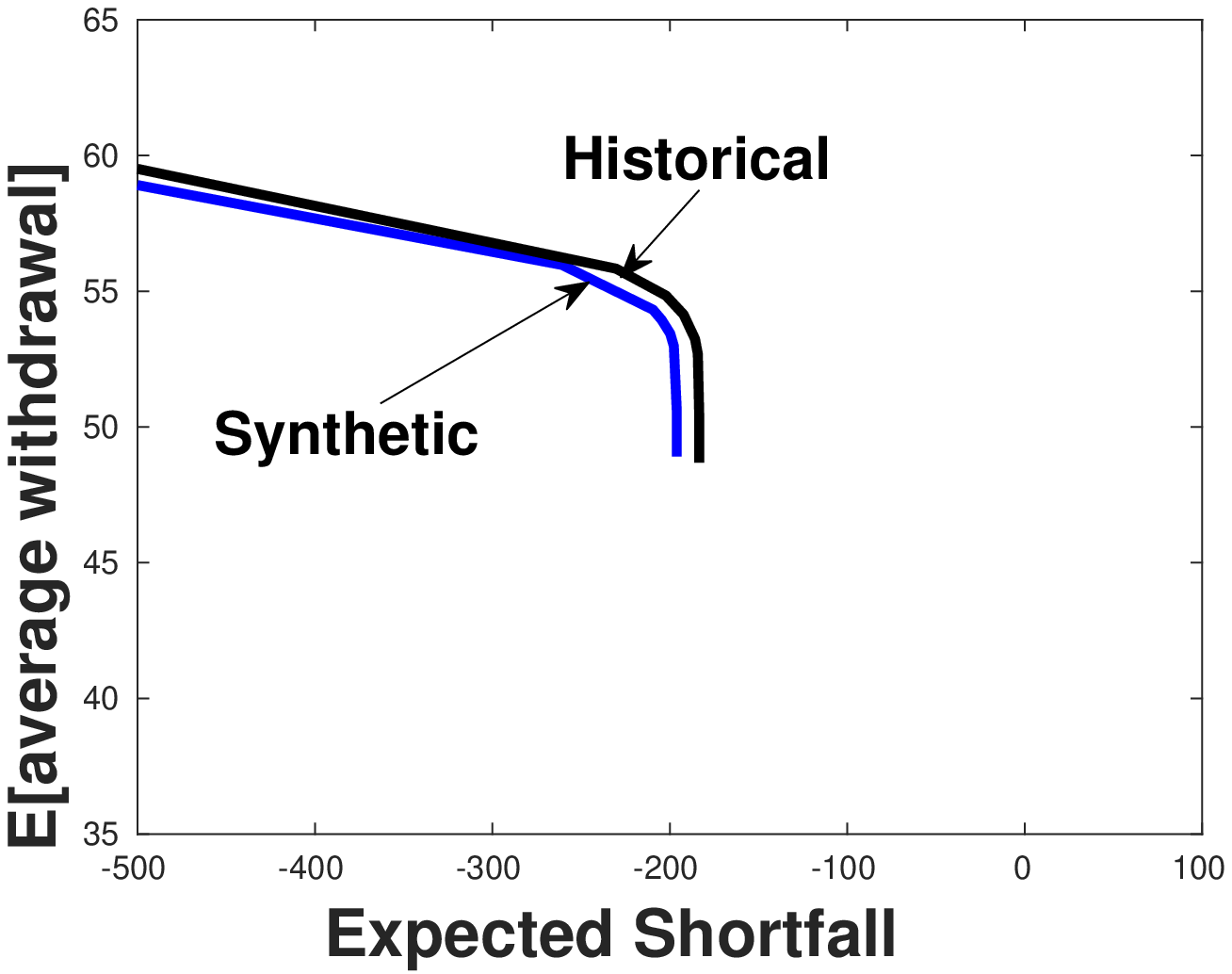}
\caption{$q_{\min}=40, q_{\max} = 65$.}
\label{frontier_40_65_boot}
\end{subfigure}
}
\caption{EW-ES frontiers, comparison of synthetic frontiers, and
frontier generated from (i) controls computed in the synthetic market
(ii) control tested in the historical (bootstrapped) market. Scenario in Table \ref{base_case_1}.
Parameters based on real CRSP index,
real 10-year US treasuries (see Table \ref{fit_params}).  Control computed and stored,
historical frontier computed using
$10^5$ bootstrap resampled simulations, blocksize $0.25$ years.
Historical data in range 1926:1-2019:12.
Units: thousands of dollars.
}
\label{EW_ES_frontiers_boot}
\end{figure}

\section{Discussion}
Adding a variable withdrawal strategy, coupled with optimal asset allocation,
can dramatically improve the expected average withdrawal, compared with
a constant withdrawal strategy.  If the minimum withdrawal of the variable
strategy is set equal to the constant withdrawal strategy, then this result
still holds, requiring only a small increase in risk, as measured by expected
shortfall (ES).

At first sight, this result is almost too good to be true.  However, this is
easily explainable, due to two effects.  
\begin{itemize}
   \item The median final wealth of the variable withdrawal strategy is much lower
         than the constant withdrawal policy.  Hence, the variable withdrawal strategy
         is much more efficient in disbursing cash to the retiree over the investment
         horizon, while keeping the overall risk almost unchanged.\footnote{{\em{``If we have a good year,
         we take a trip to China,...if we have a bad year, we stay home and play canasta.''}}
         retired professor Peter Ponzo, discussing his DC plan withdrawal strategy
         {\url{https://www.theglobeandmail.com/report-on-business/math-prof-tests-investing-formulas-strategies/article22397218/}}}

   \item Due to the  quasi-bang-bang control for the variable withdrawal strategy,
         the median optimal policy is to withdraw at the minimum rate for the first few years,
         followed by withdrawing at the maximum rate for 20-25 years.  This avoids large
         withdrawals in the early years, ameliorating sequence of return risk,
         with the benefits to be gained in later years.
\end{itemize}

The downside of this strategy is that, although the average withdrawal is signficently 
improved, the first few years after retirement typically have the smallest (minimum)
withdrawals.  This may not be desirable, if retirees are most active at this time,
and may wish to have larger incomes.

There are several ways to move spending earlier, but these all come at some cost
in terms of EW-ES efficiency.  Recall that all quantities in this paper are real,
hence we are always preserving real spending power.  However, we could add a real
discounting multiplier to our measure of reward.  This would change equation (\ref{EW_def})
to
\begin{eqnarray}
      {\text{EW}}( X_0^-, t_0^-) = E_{\mathcal{P}_0}^{X_0^+, t_0^+}
                                 \biggl[
                                    \sum_{i=0}^{i=M} e^{- \beta t_i} q_i
                                 \biggr] ~,  \label{EW_def_discounted}
\end{eqnarray}
where $\beta > 0$ is a discounting parameter. We experimented with this approach, and it
did tend to move more spending earlier, but at the expense of more risk.  In fact,
the results using a discounting factor were similar to simply decreasing the
scalarization parameter $\kappa$ in equation (\ref{objective_overview}).  
The withdrawal
control in this case was also quasi-bang-bang.  
Adding a discount factor does not change the bang-bang nature of the withdrawal
control, at least in the continuous
withdrawal limit.  This can be be verified by adding a discounting factor to equation (\ref{expanded_1_cont}).

In order to produce a withdrawal control which is more gradual (not bang-bang), we need
to add a nonlinearity to the measure of reward.  Let $\mathcal{U}(\cdot)$ be a utility function,
then our measure of reward could be
\begin{eqnarray}
   {\text{EW}}( X_0^-, t_0^-) = E_{\mathcal{P}_0}^{X_0^+, t_0^+}
                                 \biggl[
                                    \sum_{i=0}^{i=M}  \mathcal{U} ( q_i )
                                 \biggr] ~.  \label{EW_def_utility}
\end{eqnarray}
We experimented with various utility functions (e.g. $\log$, power law), and this did have the
effect of producing smoother controls as a function of wealth.  However, this came at the
cost of poor EW-ES efficiency.  Recall that our initial objective in this work was to provide
the retiree with fixed minimum cash flows, with small risk,  while maximizing total
withdrawals.   Using a nonlinear utility function
would conflict with this criteria.  We leave exploration of the use of a nonlinear utility 
in the reward function as a topic for future work.

\section{Conclusions}
Our objective in this work was to provide a retiree with minimum fixed cash flows over
a long time horizon, with high probability of expected average withdrawals being
signficently larger than the minimum withdrawal.  In addition, we control the risk
of this strategy as measured by expected shortfall.  

The optimal controls consisted of
a variable withdrawal rate (with minimum and maximum constraints) and the asset
allocation strategy.  Allowing a variable withdrawal strategy (compared to a fixed
withdrawal) dramatically improved the expected average withdrawals, at the expense
of a very small increase in expected shortfall risk.  However, the early withdrawals
were (with high probability) at the minimum level, and larger withdrawals were achieved
later on in life.

Note that the controls were computed in the synthetic market, i.e. a market based
on a parametric stochastic process model calibrated to data over the 1926:1-2019:12 
period.  However, bootstrap resampling tests showed that this strategy is
robust to model and parameter uncertainty.

An intriguing application of this research is the following.  In many countries (Canada
in particular), there is a reward, in terms of increased cash flows, if the retiree
delays receiving government benefits, until later ages (e.g. 70 in Canada).
The common advice is to delay receiving government benefits, and
offset this with larger drawdowns from the DC account
in the early years of retirement.  The argument here is that government benefits
are indexed and certain, compared with uncertain investment cash flows.

However, our results indicate that fairly small reductions in withdrawals from
the DC account in early years result in much larger withdrawals in later years,
with a high probability.  Hence a better strategy may be to take some government
benefits earlier, allowing smaller withdrawals from the DC account in
early years (reducing sequence of return risk).  The smaller government benefits in later years will (again, with
high probability) be offset by these much larger withdrawals from the DC account.
Of course, although this strategy has a high probability of success, it is not
risk-free.

\section{Acknowledgements}
P. A. Forsyth's work was supported by the Natural Sciences and Engineering Research Council of
Canada (NSERC) grant RGPIN-2017-03760.  

\section{Conflicts of interest}
The author has no conflicts of interest to report.

\clearpage
\appendix
\section*{Appendix}

\section{Detailed efficient frontiers: synthetic market}
\label{Detailed_synthetic}
Tables \ref{optimal_p_q} and \ref{optimal_p_q_2} give the detailed
results used to construct Figure \ref{EW_ES_frontiers}.

\begin{table}[hbt!]
\begin{center}
\begin{tabular}{ccccc} \toprule
$\kappa$ & ES (5\%) & $E[ \sum_i q_i]/(M+1)$  & $Median[W_T]$ & $\sum_i Median(p_i) /M$     \\ \midrule
% new stuff
  0.05 &  -498.2   & 56.83               &  119.7       &  .430 \\
 0.2   & -177.9 & 53.24                   & 324.9        & .405 \\
 0.5 &   -50.86  & 51.33                  & 368.2        & .363\\
 1.0 &   -4.730        &49.89                   & 406.3          & .331\\
 5.0 &     25.79    & 47.67                    &  451.8         & .282\\
 50.0 &    30.62    & 45.63                    & 524.6          & .259\\
 5000.0  &   31.02   &42.90                     & 661.8        & .252\\
\bottomrule
\end{tabular}
\caption{Synthetic market results for optimal strategies,  assuming
the scenario given in Table~\ref{base_case_1}. Stock index: real capitalization weighted CRSP stocks;
bond index: ten year treasuries.  Parameters from Table \ref{fit_params}.
Units: thousands of dollars. Statistics
based on $2.56 \times 10^6$ Monte Carlo simulation runs.
Control is computed using the Algorithm in Section \ref{algo_section}, stored, and then used
in the Monte Carlo simulations.
$q_{\min} = 35.0$, $q_{\max} = 60$.
$(M+1)$ is the number of withdrawals.
$M$ is the number of rebalancing dates.
$\epsilon = 10^{-6}$.
\label{optimal_p_q}
}
\end{center}
\end{table}

\begin{table}[hbt!]
\begin{center}
\begin{tabular}{ccccc} \toprule
$\kappa$ & ES (5\%) & $E[ \sum_i q_i]/(M+1)$  & $Median[W_T]$ & $\sum_i Median(p_i) /M$     \\ \midrule
  0.1    & -587.3    & 59.98        & 46.36  & .455 \\
  0.5    & -260.8   &  55.97    &   75.06    & .418 \\
  1.0 &   -237.2    & 55.00    &   73.61    & .366 \\
  1.75&    -209.5   & 54.32     &  75.49  & .341 \\
  2.5  &  - 204.7   &  53.95     & 78.56    & .331  \\
  % redo ths check median
  5.0 &   -199.8    &  53.44     &  78.31   &  .314 \\
 10.0 &  -197.8    &  52.98      &  91.68   &  .303 \\
 $10^{3}$ &  -196.2  &  50.63      &  202.1  & .285 \\
 $10^{5}$ & -196.1 & 48.91       &  316.0  &  .303 \\
% new stuff
\bottomrule
\end{tabular}
\caption{Synthetic market results for optimal strategies,  assuming
the scenario given in Table~\ref{base_case_1}. Stock index: real capitalization weighted CRSP stocks;
bond index: ten year treasuries.  Parameters from Table \ref{fit_params}.
Units: thousands of dollars. Statistics
based on $2.56 \times 10^6$ Monte Carlo simulation runs.
Control is computed using the Algorithm in Section \ref{algo_section}, stored, and then used
in the Monte Carlo simulations.
$q_{\min} = 40.0$, $q_{\max} = 65$.
$(M+1)$ is the number of withdrawals.
$M$ is the number of rebalancing dates.
$\epsilon = 10^{-6}$.
\label{optimal_p_q_2}
}
\end{center}
\end{table}

\clearpage

\section{Effect of blocksize: stationary block bootstrap resampling}
\label{block_app}
Table \ref{bootstrap_35_60} shows the effect of blocksize on the bootstrap resampling
algorithm.

\begin{table}[hbt!]
\begin{center}
\begin{tabular}{ccccc} \toprule
$\kappa$ & ES (5\%) & $E[ \sum_i q_i]/(M+1)$  & $Median[W_T]$ & $\sum_i Median(p_i) /M$     \\ \midrule
\multicolumn{5}{c}{Synthetic Market} \\ \midrule
  0.5 &       -50.86  & 51.33                  & 368.2        & .363\\
  \midrule
\multicolumn{5}{c}{Historical Market: $\hat{b} = 0.25$ years.} \\ \midrule
         &    -17.28      & 51.19     & 340.6  & .360 \\ \midrule
\multicolumn{5}{c}{Historical Market: $\hat{b} = 0.5$ years.} \\ \midrule
         &  -40.63        &51.19     & 343.1  & .361   \\ \midrule
\multicolumn{5}{c}{Historical Market: $\hat{b} = 1$ years.} \\ \midrule
         &  -34.13        & 51.23    & 342.9  &   .361  \\ \midrule
 \multicolumn{5}{c}{Synthetic Market} \\ \midrule
1.0    & 4.730    & 49.89    &   406.3    & .331 \\ \midrule
\multicolumn{5}{c}{Historical Market: $\hat{b} = 0.25$ years.} \\ \midrule
       &  20.47  & 49.72   & 381.0   & .330 \\ \midrule
\multicolumn{5}{c}{Historical Market: $\hat{b} = 0.5$ years.} \\ \midrule
       & -5.10    & 49.72   & 383.6   &  .331 \\ \midrule
\multicolumn{5}{c}{Historical Market: $\hat{b} = 1$ years.} \\ \midrule
       & -0.84  &  49.74   & 384.4   & .331    \\
\bottomrule
\end{tabular}
\caption{Historical market results for optimal strategy, $q_{\min} = 35, q_{\max} = 60$.
The scenario is given in Table~\ref{base_case_1}. Stock index: real capitalization weighted CRSP stocks;
bond index: 10 year treasuries.
Historical data in range 1926:1-2019:12.
Units: thousands of dollars. Statistics
based on $10^5$ bootstrap simulations.
Control is computed using the algorithm in Section \ref{algo_section}, stored, and then used
in the bootstrap resampling tests.
$(M+1)$ is the number of withdrawals. $M$ is the number of rebalancing dates.
$\epsilon = 10^{-6}$.
\label{bootstrap_35_60}
}
\end{center}
\end{table}

\section{Bootstrapped frontiers}
Tables \ref{boot_optimal_p_q_35_60} and \ref{boot_optimal_p_q_40_65} show the detailed 
results for the EW-ES frontiers.  The controls were computed in the synthetic market,
and tested in the historical market.

\begin{table}[hbt!]
\begin{center}
\begin{tabular}{ccccc} \toprule
$\kappa$ & ES (5\%) & $E[ \sum_i q_i]/(M+1)$  & $Median[W_T]$ & $\sum_i Median(p_i) /M$     \\ \midrule
% new stuff
  0.05 &  -450.9    &  56.84    &  90.31   &   .424\\
 0.2   & -126.7      &  53.16     & 294.7  & .402\\
 0.5 &  -17.28        &   51.20    &  340.6  & .360\\
 1.0 &   20.47    &  49.72   &     381.0     & .330\\
 5.0 &   42.86    &   47.45  &    430.3  &    .282\\
 50.0 &   45.39   &   45.38      &  502.5   & .258\\
 5000.0 &  45.64   &  42.62       & 638.9     & .252 \\
 \midrule
 \multicolumn{5}{c}{$q_{\max} = q_{\min} = 35$} \\ \midrule
N/A     & 45.63    &  35.0     & 920.0   & .269  \\
\bottomrule
\end{tabular}
\caption{Control computed in the synthetic market,  assuming
the scenario given in Table~\ref{base_case_1}. Stock index: real capitalization weighted CRSP stocks;
bond index: ten year treasuries.  Parameters from Table \ref{fit_params}.
Units: thousands of dollars. Statistics
based on $10^5$ bootstrap resampling of the historical data.
Historical data in range 1926:1-2019:12.
Expected blocksize $\hat{b} = .25 $ years.
$q_{\min} = 35.0$, $q_{\max} = 60$.
$(M+1)$ is the number of withdrawals.
$M$ is the number of rebalancing dates.
\label{boot_optimal_p_q_35_60}
}
\end{center}
\end{table}

\begin{table}[hbt!]
\begin{center}
\begin{tabular}{ccccc} \toprule
$\kappa$ & ES (5\%) & $E[ \sum_i q_i]/(M+1)$  & $Median[W_T]$ & $\sum_i Median(p_i) /M$     \\ \midrule
% new stuff
  0.1    & -531.8   & 59.94   & 25.99  & .453   \\
  0.5    &  -230.3   & 55.83  & 53.89   &  .412  \\
  1.0 &     -202.2   & 54.84  &  54.30 & .363 \\
  1.75  & -192.3    &  54.14   & 57.13  & .337 \\
  5.0 &  -185.8     &  53.24   & 61.79   & .310\\
 10.0 &   -184.3    & 52.69    & 75.40    &  .299 \\
 1000.0 & -183.5    &  50.43   &  186.3   & .283\\
 $10^{5}$ & -183.5  &   48.69  &  296.9   &  .300\\ \midrule
 \multicolumn{5}{c}{$q_{\max} = q_{\min} = 40$} \\ \midrule
   N/A  &  -183.5   & 40.0     & 676.7    &  .340  \\
\bottomrule
\end{tabular}
\caption{Control computed in the synthetic market,  assuming
the scenario given in Table~\ref{base_case_1}. Stock index: real capitalization weighted CRSP stocks;
bond index: ten year treasuries.  Parameters from Table \ref{fit_params}.
Units: thousands of dollars. Statistics
based on $10^5$ bootstrap resampling of the historical data.
Historical data in range 1926:1-2019:12.
Expected blocksize $\hat{b} = .25 $ years.
$q_{\min} = 40.0$, $q_{\max} = 65$.
$(M+1)$ is the number of withdrawals.
$M$ is the number of rebalancing dates.
\label{boot_optimal_p_q_40_65}
}
\end{center}
\end{table}

\clearpage

\bibliographystyle{chicago}
\bibliography{paper}

\end{document}